\DeclareMathOperator*{\defeq}{\triangleq}
\newtheorem{theorem}{Theorem}
\newtheorem{corollary}{Corollary}[theorem]
\newcommand{\bit}{\begin{itemize}}
\newcommand{\eit}{\end{itemize}}
\newcommand{\bc}{\begin{center}}
\newcommand{\ec}{\end{center}}
\newcommand{\ba}{\begin{array}}
\newcommand{\ea}{\end{array}}
\newcommand{\beq}{\begin{equation}}
\newcommand{\eeq}{\end{equation}}
\newcommand{\beqn}{\begin{equation*}}
\newcommand{\eeqn}{\end{equation*}}
\newcommand{\bean}{\begin{eqnarray*}}
\newcommand{\eean}{\end{eqnarray*}}
\newcommand{\bea}{\begin{eqnarray}}
\newcommand{\eea}{\end{eqnarray}}
\def\C{\mathbb{C}}
\def\F{\mathbb{F}}
\newcommand{\Xc}{{\mathcal X}}
\newcommand{\T}{{\scriptscriptstyle\mathsf{T}}}
\newtheorem{remark}{Remark}
\renewcommand{\Bmatrix}[1]{\begin{bmatrix}#1\end{bmatrix}}
\newtheorem{claim}{Claim}
\begin{document}
\sloppy

\title{Feedback through Overhearing}
\author{Jinyuan Chen,  Ayfer \"Ozg\"ur and Suhas Diggavi 
\thanks{Jinyuan Chen and Ayfer \"Ozg\"ur  are with Stanford University, CA (emails:jinyuanc@stanford.edu, aozgur@stanford.edu ). Suhas Diggavi is with University of California, Los Angeles (email:suhas@ee.ucla.edu). }
}


\maketitle
\thispagestyle{empty}

\begin{abstract}
  In this paper we examine the value of feedback that comes from
  overhearing, without dedicated feedback resources. We focus on a
  simple model for this purpose: a deterministic two-hop interference
  channel, where feedback comes from overhearing the forward-links.  A
  new aspect brought by this setup is the dual-role of the relay
  signal. While the relay signal needs to convey the source message to
  its corresponding destination, it can also provide a feedback signal
  which can potentially increase the capacity of the first hop. We derive inner  and outer bounds on the sum capacity which match for a large range of the parameter values. Our results identify the parameter ranges where overhearing  can provide non-negative capacity gain and can even achieve the performance with dedicated-feedback resources. The results also provide insights into which transmissions are most useful to overhear.

\end{abstract}

\section{Introduction}

Shannon showed that feedback cannot increase the capacity of the
point-to-point discrete memoryless channel \cite{shannon:56}.  Later
works showed that feedback can increase the capacity of the Gaussian
MAC, broadcast and relay channels \cite{GaarderWolf:75, CG:79,
  Ozarow:84}, but only through a power gain. More recently, it has
been shown in \cite{ST:11} that feedback can provide
degrees-of-freedom gain in interference channels, which has generated
significant recent interest in this setup
\cite{VSA:12,WSDV:13,PTPH:13,SAYS:09,SWT:12,CD:12}.

A common assumption in all these works is that there is a dedicated
channel for feedback, which in practice corresponds to allocating some
of the time/frequency resources of the system for feeding back
information. However, multi-hop wireless communication brings the possibility
of feedback, through overhearing, which does not require any dedicated
resources; transmitters of the previous hop can overhear the
transmissions in the next hop, which can be used as feedback. Our goal
in this paper is to understand how to optimally exploit such
overhearing and the corresponding benefits it can provide.

To study this question,  we focus on a two-hop
interference channel where two sources are communicating to their
corresponding destinations by the help of two relay nodes. See
Fig.~\ref{fig:TwoHopIC}. We assume that the transmissions of the
relays in the second hop can be overheard by the sources in the first
hop. To simplify the analysis, we assume that the relay transmissions
over the second hop do not interfere with each other and we
approximate the Gaussian channels in the system by a linear
deterministic model \cite{ADT:11}. We characterize the sum capacity of
this setup for most parameter ranges and provide lower and upper
bounds on the capacity for the remaining regimes.

This setup reveals the dual role of the relay signal: its need to
convey the message to its corresponding destination versus its
potential to send a feedback signal and increase the capacity of the
first hop.  Since it is well-understood that feedback increases the
capacity of the interference channel by allowing each transmitter to
(partially) learn the message of the other transmitter, this
introduces a trade-off between these two roles of the relay signal. On
one hand, the relay signal should contain information about the
desired message at its corresponding destination, on the other hand it has to
contain information about the interfering message in order to increase
the capacity of the first hop.

\begin{figure}[t!]
\centering
\includegraphics[width=8.5cm]{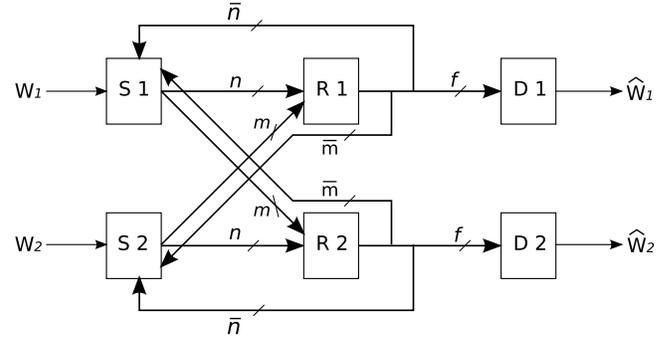}
\caption{Deterministic model of a two-hop interference channel. The
  labels indicate the strengths of the corresponding links. The links
  arriving to the same node are subject to superposition and the links
  emanating from the same node are subject to broadcast (i.e. they
  carry the same signal possibly at different strengths.)}
\label{fig:TwoHopIC}
\end{figure}

The main contribution of our paper is to design strategies that are able to optimally consolidate these two roles of the relay signal in various regimes. These strategies make use of the ideas for the interference channel with feedback from \cite{ST:11,VSA:12, SWT:12}, however are strictly different since the main focus in our case is consolidating feedforward and feedback communication. Our upper bounds also focus on this new aspect trying to capture the best trade-off one can have between these two roles in various regimes. 

Our results show that depending on the regime, these two roles of the
relay signal can be either compatible or competing. For example, when
the interference channel in the first hop is in the weak interference
regime and there is a strong backward cross-link from the relays to
the source nodes (i.e., from R1 to S2 and R2 to S1), by transmitting
the bits of its corresponding source node, R1 can simultaneously
communicate these bits to their final destination D1 and feed them back
to the other source node S2 which gives S2 the opportunity to (partially) learn
the message of S1 (the same holds for R2, D2 and S2 respectively). 
In the next transmission, S2 can
use these bits to resolve part of the interference at R2 which
helps to increase the capacity of the first hop, without sacrificing
the capacity of the second hop for feedback. On the other hand, when
the interference channel in the first hop is in the strong
interference regime and there is a strong backward direct link
from the relays to the source nodes (i.e., from R1 to S1 and R2 to
S2), the relay signal needs to convey the desired message to its
corresponding destination while conveying the other source message to
its own source node (for example R1 needs to convey the message of S1
to D1 and the message of S2 to S1). 
We show that in this regime rate splitting between these two relaying roles is
optimal.  As a consequence, we identify the regimes where overheard
feedback can do as well as dedicated feedback channels (of the same
capacity) and also the regimes where the capacity suffers from the
overheard nature of the feedback.

\begin{figure}[t!]
\centering
\includegraphics[width=7cm]{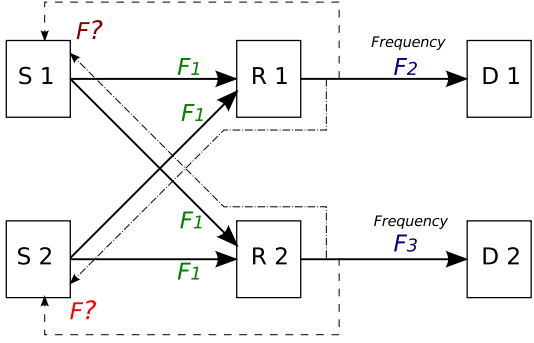}
\caption{System model with three frequencies: The first hop operates
  over frequency $F_1$, and the relays operate over frequencies $F_2$
  and $F_3$ respectively.  The source nodes have the option to listen
  to the second hop transmissions by tuning to either frequency $F_2$
  or $F_3$.}
\label{fig:F1F2F3TwoHopIC}
\end{figure}

Our results also suggest insights on which transmissions are more
useful to overhear. Consider the setup in
Fig.~\ref{fig:F1F2F3TwoHopIC}, where the first hop operates over
frequency $F_1$ and the relays operate over frequencies $F_2$ and
$F_3$ respectively. Assume that the source nodes have the option to
listen to the second hop transmissions by tuning to either frequency
$F_2$ or $F_3$. Assuming all backward channels have the same strength,
when the interference channel in the first hop is in the weak
interference regime our results reveal that it is more advantageous in terms of sum capacity for S1
to listen to R2's transmission over $F_3$ and for S1 to listen to R1's
transmission over $F_2$ rather than vice versa. (In other words, the
sum capacity of having $( \bar{m} = \theta, \bar{n} =0, m, n, f)$ is
better than that of having $( \bar{m} = 0, \bar{n} =\theta, m, n, f)$
for any $\theta, m,n,f$ s.t. $m/n\leq 2/3$ where $m,n,\bar{m},\bar{n},f$ indicate the strength of the deterministic channels as indicated in Figure~\ref{fig:F1F2F3TwoHopIC}.) This is fundamentally due
to the overheard nature of the feedback signal; feeding back over the
cross-link is compatible with feedforward communication, while feeding
back over the direct-link leads to rate splitting and therefore
smaller overall sum rate.  On the other hand, in the strong interference regime, it is
more desirable for S1 to listen to R1 over $F_2$ and for S2 to listen
to R2's transmission over $F_3$.


The remainder of this work is organized as follows. 
Section~\ref{sec:system} first describes the system model. Section~\ref{sec:result} then gives the main results of this work. 
The  sketches of the achievability proof are given in  Section~\ref{sec:scheme}, leaving the details to the appendix, while  the proof of the outer bound is given in Section~\ref{sec:outerbound}. Finally, Section~\ref{sec:conclusion} comes to the conclusions.

\section{System model  \label{sec:system}}

We consider a two-hop deterministic interference channel where the
source nodes S1 and S2 want to communicate to destination nodes D1 and
D2 respectively, with the help of two relay nodes R1 and R2.  Let $X_{S1}, X_{S2},
X_{R1}, X_{R2}$ denote the signals transmitted by S1, S2, R1 and R2
respectively and let $Y_{S1}, Y_{S2}, Y_{R1}, Y_{R2}, Y_{D1}, Y_{D2}$
similarly denote the signals received by the corresponding nodes.  The
input and output relations between these signals at time $t$ are given
as follows:
\begin{align}  
Y_{R1}[t] &=   S^{q- n} X_{S1}[t]  \oplus   S^{q- m} X_{S2}[t]          \label{eq:detTHIC1} \\
Y_{R2}[t] &=   S^{q- m} X_{S1}[t]  \oplus   S^{q- n} X_{S2}[t]           \label{eq:detTHIC2} 
\end{align}  
where $X_{S1}[t] = \Bmatrix{ X_{S1,1}[t], \cdots, X_{S1,q}[t] }^\T \in
\F^{q}_{2}$ , $q=\max(m,n)$, and $S^{q- n}$ is a $q\times q$ lower
shift matrix of the form
$$
S^{q- n}=\left[\begin{array}{lllllll} 0 &0 &0 &\cdots &\cdots &\cdots &0\\
\vdots & \vdots &\vdots &\vdots &\vdots &\vdots&\vdots\\
1 &0 &0 &\cdots &\cdots &\cdots &0\\
0 &1 &0 &\cdots&\cdots &\cdots  &0\\
\vdots & \vdots &\vdots &\vdots &\vdots &\vdots &\vdots\\
0 &0 &0 &\cdots &1& \cdots &0\\
\end{array}\right]
$$
which upon multiplying $X_{S1}[t]$ yields $\Bmatrix{ 0,
  \cdots,0, X_{S1,1}[t], \cdots, X_{S1,n}[t] }^\T$,
($X_{S2}[t],Y_{R1}[t],Y_{R2}[t]$ and $S^{q- m}$ are similarly
defined). $\oplus$ denotes modulo~2 addition. (See \cite{ADT:11} for
more detailed description of the model). Note that $n$ and $m$ denote
the number of bits that can be communicated over the direct and
cross links respectively over the first hop (see
Fig.~\ref{fig:TwoHopIC}). Similarly, for the second hop we have
\begin{align}  
Y_{D1}[t] &=  S^{ \bar{q}- f}  X_{R1}[t]  \label{eq:detTHIChop2D1} \\
Y_{D2}[t] &=  S^{ \bar{q}- f}  X_{R2}[t]  \label{eq:detTHIChop2D2}  \\ 
Y_{S1}[t] &=  S^{\bar{q}- \bar{n}} X_{R1}[t]    \oplus   S^{\bar{q}- \bar{m}} X_{R2}[t]  \label{eq:detTHIChop2S1} \\
 Y_{S2}[t] &=  S^{\bar{q}- \bar{n}}X_{R2}[t]   \oplus   S^{\bar{q}- \bar{m}} X_{R1}[t] \label{eq:detTHIChop2S2}  
\end{align}  
where $X_{R1}[t],X_{R2}[t],Y_{D1}[t],X_{D2}[t],Y_{S1}[t],Y_{S2}[t]$
are binary vectors of length $\bar{q}=\max( \bar{m}, \bar{n}, f)$ and
$S^{ \bar{q}- f}$ etc. are $ \bar{q}\times  \bar{q}$ shift matrices analogously defined. Note
that the relay signals are not only transmitted to their respective
destinations but also overheard by the sources through a backward
interference channel. While the forward channels from the relays to
the destinations have capacity $f$ bits, the backward interference
channel from the relays to the source nodes has capacity $\bar{m}$ and
$\bar{n}$ for the direct and the cross links, respectively.

$X_{R1}[t]$ for $t=1,2,\cdots, N$ is a function of $(Y_{R1}[1], \cdots,
Y_{R1}[t-1])$, while $X_{S1}[t]$ is a function of $( W_1, Y_{S1}[1], \cdots, Y_{S1}[t-1])$ , where $W_1 \in \{1, 2, \cdots, 2^{
  NR_1}\}$ denotes the message of source~$S1$ intended for
destination~$D1$ of rate $R_1$ (similarly for S2, R2 and D2).  We say
$(R_1,R_2)$ is achievable if $W_1$ and $W_2$ can be decoded with
arbitrarily small probability of error at their respective
destinations as $N$ tends to infinity. We define the sum capacity as
$\C_{\text{sum}} = \sup \{ R_{\text{sum}} =R_1 +R_2 : (R_1,R_2)
\,\text{is achievable} \}$.

\section{Main results \label{sec:result}}

The following theorems are the main results of the paper.  
We prove these theorems in   Section~\ref{sec:scheme} and Section~\ref{sec:outerbound}, leaving the details to the appendix.

\vspace{5pt}
\begin{theorem}   [Outer bound]  \label{thm:outerbound} 
For any $f, n, m, \bar{n}, \bar{m},$, the sum capacity of the system defined in Section~\ref{sec:system} is upper bounded by
\begin{align}  \label{eq:outerb}
\C_{\text{sum}}  \leq 
\begin{cases}
     \min \Bigl\{  2\max\{n-m,m \}+2  \max\{ \bar{n},\bar{m} \} ,  &   \\   \quad \quad\quad    2n- m ,  \    2 f   \Bigr\}     &   \!\!\! \!\!\!\!\! \!\!  \!\!\! \!\!\!\!\! \!\!  \text{for }  \alpha \in [0,  2/3] \\
   \min \bigl\{ \  \max\{2n-m,m \} ,  \  2 f \bigr\}   &  \!\!\! \!\!\!\!\! \!\! \!\!\! \!\!\!\!\! \!\!    \text{for  $ \alpha \in [2/3,  2]$} \\
   \min \Bigl\{ n+ f + ( \bar{n} - f )^{+},  & \\ \quad\quad\quad      2n + 2 \bar{n} , \     m,   \  2 f \Bigr\}     &  \!\!\! \!\!\!\!\! \!\! \!\!\! \!\!\!\!\! \!\!    \text{for }  \alpha \in [2,  \infty] 
\end{cases}
\end{align}
where $\alpha \defeq m/n$, and $(\bullet)^{+} \defeq \max\{\bullet, 0\}$.
\end{theorem}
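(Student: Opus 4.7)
The plan is to prove each of the individual upper bounds appearing in \eqref{eq:outerb} separately, and then to obtain the three regime-dependent expressions by taking the minimum of the relevant ones in each regime. The proofs combine three ingredients: (i) cut-set arguments across the second hop and across the MACs at the relays; (ii) Suh--Tse type genie-aided converse bounds from \cite{ST:11,SWT:12}, adapted to incorporate the backward cross- and direct-links as feedback; and (iii) new bounds that capture the overhearing constraint, namely that the feedback signals $Y_{S1}, Y_{S2}$ are deterministic functions of $X_{R1}, X_{R2}$ which themselves must carry the forward messages through $f$-bit links to $D_1, D_2$.

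First I would dispatch the elementary bounds. The cut separating $\{S_1, S_2, R_1, R_2\}$ from $\{D_1, D_2\}$ together with Fano's inequality gives $N(R_1 + R_2) \le I(X_{R1}^N; Y_{D1}^N) + I(X_{R2}^N; Y_{D2}^N) + N\epsilon_N \le 2Nf + N\epsilon_N$, hence $\C_{\text{sum}} \le 2f$. In the strong-interference regime ($\alpha \ge 2$), a MAC converse at $R_1$ using $R_1 + R_2 \le H(Y_{R1}) \le \max\{n,m\} = m$ yields $\C_{\text{sum}} \le m$, since once $m \ge 2n$ the deterministic MAC at each relay is invertible and both relays can in principle decode $(W_1, W_2)$.

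Next I would adapt the Suh--Tse converse for the interference channel with feedback to our backward links at strengths $\bar n, \bar m$. Treating each source as being aided by a genie that supplies the feedback symbols of the other and applying Fano's inequality together with the standard deterministic identities $H(Y_{Ri} \mid X_{Si}) \le m$ and $H(Y_{Ri}) \le \max\{n,m\}$, one obtains $N(R_1 + R_2) \le N\bigl(2\max\{n-m, m\} + 2\max\{\bar n, \bar m\}\bigr) + N\epsilon_N$ for $\alpha \in [0, 2/3]$, where the additive $2\max\{\bar n, \bar m\}$ counts the feedback symbols supplied by the genie per channel use. The analogous computation in the moderate regime gives $\max\{2n-m, m\}$, and a $Z$-channel style genie argument (revealing $X_{S2}^N$ to the $R_1$-to-$D_1$ pipe while still requiring $W_1$ to be decoded) yields the $2n - m$ term.

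The hardest part, and the main obstacle, will be the bounds $n + f + (\bar n - f)^{+}$ and $2n + 2\bar n$ in the strong-interference regime, which depend on the dual role of the relay signal. Unlike the standard feedback converses, here $Y_{S1}$ and $Y_{S2}$ are not free side channels but deterministic functions of $X_{R1}, X_{R2}$, and those relay inputs are causal functions of $Y_{R1}, Y_{R2}$ that must simultaneously deliver the forward messages through the $f$-bit pipes to $D_1, D_2$. To obtain $n + f + (\bar n - f)^{+}$, I would introduce a genie that reveals $Y_{D1}^N$ (equivalently, the top $f$ levels of $X_{R1}^N$) to $S_2$; the residual feedback advantage that $S_2$ draws from $Y_{S2}^N$ beyond this genie is then at most $(\bar n - f)^{+}$ levels per channel use, and combining this with a first-hop direct-link contribution of $n$ and a second-hop contribution of $f$ via a careful sum-rate decomposition delivers the claim. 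A parallel argument combining the strong-interference invertibility of the MAC at each relay with the total backward-link budget $2\bar n$ per use yields $2n + 2\bar n$. The delicate step throughout is choosing the genie so that the forward role of $X_{R1}, X_{R2}$ (delivering $W_1, W_2$ to $D_1, D_2$) and their feedback role (conveying $W_2, W_1$ back to $S_2, S_1$) are neither double-counted nor over-suppressed. Once each bound in \eqref{eq:outerb} is established for all $\alpha$, the theorem follows by taking the minimum within each of the three $\alpha$-regimes.
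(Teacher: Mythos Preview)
Your overall framework matches the paper: reduce to a list of regime-independent single inequalities and take minima. The paper, like you, handles $2f$ by the second-hop cut, and it does not re-derive the three ``feedback-IC'' bounds $2\max\{n-m,m\}+2\max\{\bar n,\bar m\}$, $2n+2\bar n$, and $\max\{n,m\}+(n-m)^{+}$ at all --- it simply cites \cite{SWT:12,VSA:12}, noting that any dedicated-feedback outer bound is also an overheard-feedback outer bound. So your plan to re-derive those is more work than needed, but not wrong.

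The genuine gap is in your treatment of the only new bound, $\C_{\text{sum}}\le n+f+(\bar n-f)^{+}$. Your proposed genie (reveal $Y_{D1}^N$, i.e.\ the top $f$ levels of $X_{R1}$, to $S_2$) is aimed at the wrong relay signal: $Y_{S2}$ is the top $\bar n$ levels of $X_{R2}$ (plus the top $\bar m$ of $X_{R1}$), so revealing $X_{R1,1:f}$ to $S_2$ does not leave a residual of $(\bar n-f)^{+}$ in any clean sense. The paper's argument is structurally different. It starts asymmetrically with $I(W_1;Y_{R1}^N)+I(W_2;Y_{D2}^N)$, cancels to reach $H(Y_{D2}^N)+H(Y_{R1}^N\mid W_2,Y_{D2}^N)$, and then \emph{inserts} $X_{R2,1:\bar n}^N$ into the conditioning. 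The $(\bar n-f)^{+}$ term is exactly $H(X_{R2,1:\bar n}^N\mid Y_{D2}^N)=H(X_{R2,1:\bar n}^N\mid X_{R2,1:f}^N)$, and the remaining term collapses to $\sum_t H(X_{S1,1:n}[t])\le Nn$ via a causality chain: from $Y_{R1}^{t-1}$ one recovers $X_{R1}^t$, hence (together with $X_{R2,1:\bar n}^t$) $Y_{S2}^t$, hence (together with $W_2$) $X_{S2}^{t+1}$, so only $X_{S1,1:n}[t]$ is left in $Y_{R1}[t]$. Your sketch does not identify this asymmetric starting point or the $X_{R2,1:\bar n}^N$ insertion, and without them it is not clear your genie yields the claimed three-term decomposition.
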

\vspace{5pt}

\vspace{5pt}
\begin{theorem}   [Inner bound]\label{thm:Achievability} The following  rate is achievable in the system defined in Section~\ref{sec:system} for any $f, n, m, \bar{n}, \bar{m}$: 
\begin{align*}
R_{\text{sum}} =
\begin{cases}
  R_{\text{sum}}^{\text{FBXw}}    &     \text{for }  \alpha \in [0,  2/3]  \\
 R_{\text{sum}}^{\text{NOm}}   &     \text{for   $ \alpha \in [2/3,  2]$}  \\
 R_{\text{sum}}^{\text{RSs}}    &     \text{for }   \alpha \in [2,  \infty] 
\end{cases}
\end{align*}
where
\begin{align}
R_{\text{sum}}^{\text{FBXw}} & \defeq   \min \{   2\max\{ n-m,m \} +  2\bar{m},  \  2n-m , \  2f \}   \label{eq:achieFBXw}   \\
R_{\text{sum}}^{\text{RSs}}  & \defeq    \min \{  \ n + f + ( \bar{n} -f )^{+},     \ 2n+ 2\bar{n},    \  m , \  2f  \}   \label{eq:achieRSs} \\
R_{\text{sum}}^{\text{NOm}}  & \defeq  \min \{  \max\{ 2n - m ,  m \}  ,   2 f \}
\end{align}
\end{theorem}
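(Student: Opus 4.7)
The plan is to construct three explicit linear bit-level schemes, one per regime, and verify for each that all received signals can be decoded causally. Since the model is deterministic and linear over $\F_2$, a scheme is described by specifying, at every time slot, which message bit (or $\F_2$-sum of message bits and previously overheard bits) is placed on each of the $q$ or $\bar q$ power levels at each transmitter. Decoding reduces to showing that the resulting system of $\F_2$-linear equations at each intended receiver is invertible on the intended bits, and matching the inner bound reduces to counting the number of fresh message bits reliably conveyed per channel use, case by case, corresponding to which term inside the $\min$ in each $R^{\bullet}_{\text{sum}}$ is active.

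For the intermediate regime $\alpha\in[2/3,2]$, I would simply ignore the feedback resources. The first hop alone is the familiar linear deterministic interference channel, whose sum capacity is $\max\{2n-m,m\}$ by the standard Han--Kobayashi type bit-level construction of Bresler--Tse. I would reuse that construction on the first hop, encode each relay's received bit vector as a fixed-length binary string and forward it over the $f$-bit second hop. The sum rate is then $\min\{\max\{2n-m,m\},2f\}=R^{\text{NOm}}_{\text{sum}}$, with no feedback exploited at all, which is consistent with the outer bound already matching this quantity in the same regime.

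For the weak regime $\alpha\in[0,2/3]$, I would use a two-block scheme that leverages the backward cross-link. The key observation is that $X_{R1}$ is itself a function of R1's received bits, which in turn depend on S1's message; thus when S2 overhears $X_{R1}$ on its top $\bar m$ levels, it automatically obtains linear combinations of S1's past bits, which is exactly the information needed to resolve interference at R2 in a subsequent block. I would run a baseline no-feedback inner-bound scheme in the first block to harvest $2(n-m)$ or $2m$ bits depending on whether $\alpha\le 1/2$ or $\alpha\in(1/2,2/3]$, and in the second block let each source re-transmit the $\bar m$ bits it overheard on the top $\bar m$ levels so that they align at the opposite relay and cancel the interfering bits that would otherwise have been unresolvable. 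Averaged over the two blocks this yields $2\max\{n-m,m\}+2\bar m$ fresh bits per channel use, capped further by the two-hop MAC-type bound $2n-m$ and the pipe constraint $2f$, giving exactly $R^{\text{FBXw}}_{\text{sum}}$. This regime will be the main obstacle: the bit alignment that simultaneously preserves the no-feedback payload, causally uses only past overhearings, and interlocks with the second-hop forwarding requires careful bookkeeping of which level at which node carries which message bit at each of the two block phases.

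For the strong regime $\alpha\in[2,\infty]$, the backward cross-link is dominated by the direct link ($\bar m$ is weaker than or buried under $\bar n$), so any useful feedback must travel through R1$\to$S1 and R2$\to$S2. The two roles of the relay now genuinely compete. I would split the $f$-level relay output into a fraction $\min\{f,\bar n\}$ of levels that are simultaneously heard by D1 and S1, and the remaining $(f-\bar n)^+$ levels that reach only D1. On the first of these parts R1 places a mixture that is decodable at D1 as S1's bits and interpretable at S1 as an interference-resolving combination of S2's past bits, while the remaining levels carry plain message bits for D1. Standard interference-channel-with-feedback arguments (of the Suh--Tse flavor) then let S2 cancel its contribution to R1's signal in the next block, raising the per-user first-hop rate to $n+\min\{f,\bar n\}/2$ in the symmetric case, and summing gives $n+f+(\bar n-f)^+$, which, together with the other min-arguments $2n+2\bar n$, $m$, and $2f$, yields $R^{\text{RSs}}_{\text{sum}}$. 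The verification step is again linear-algebraic: check that the assigned bits are causal functions of past outputs and that each intended decoder sees a full-rank system on its target bits, which I would organize as a small table per regime and defer the explicit entries to the appendix.
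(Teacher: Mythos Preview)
Your plan for $\alpha\in[2/3,2]$ matches the paper exactly, and your weak-regime insight—that R1's forward transmission of S1's cooperative bits is \emph{simultaneously} useful to D1 and, via the backward cross-link, to S2—is the right one. The paper implements this with a four-phase structure (hop~1, hop~2, hop~2, hop~1) pipelined across packets rather than a loose ``two-block'' scheme, and it explicitly handles the superposition at the sources (S1 overhears $a_i\oplus b_i$ from the sum of R1's and R2's top levels and strips its own $a_i$), but the core mechanism is the one you describe.

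The strong-regime proposal has a real gap. You write that on the $\min\{f,\bar n\}$ shared levels R1 places ``a mixture that is decodable at D1 as S1's bits and interpretable at S1 as an interference-resolving combination of S2's past bits.'' This is impossible: D1 has \emph{no} side information about $W_2$, so any nontrivial $\F_2$-combination involving S2's bits is undecodable there. (That overheard feedback is strictly worse than dedicated feedback for $\alpha\ge 2$, as shown by the matching outer bound, is precisely a reflection of this incompatibility.) The paper's construction therefore uses genuine rate-splitting: some relay levels in some phases are pure feedback (R1 sends S2's cooperative bit $b_j$ to S1, and that level is wasted at D1), while the rest are pure forwarding. Concretely, over the two second-hop phases each relay allocates a carefully chosen number $f'$ of its source-visible levels to feedback, and the per-packet bit count is $2\min\{n,f\}$ non-cooperative bits plus $\min\{2(\bar n-f)^++2f',\,m-2n,\,(2f-2n)^+\}$ cooperative bits traversing the loop $\text{S1}\to\text{R2}\to\text{S2}\to\text{R1}$; a case analysis over which argument of the inner $\min$ is active then yields $R^{\text{RSs}}_{\text{sum}}$. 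Your closing arithmetic $2\bigl(n+\tfrac{1}{2}\min\{f,\bar n\}\bigr)=n+f+(\bar n-f)^+$ is also simply false (take $n=1$, $\bar n=2$, $f=2$: left side $4$, right side $3$), so the rate you sketch does not in general equal the target.
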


Combining the outer bound and the inner bound, we immediately get the
following result.

\begin{corollary}   [Sum Capacity]\label{thm:optSCapacity}
  For the system defined in Section~\ref{sec:system}, the inner bound
  and outer bounds match except for the case $(m/n < 2/3, \ 0\leq
  \bar{m} < \bar{n})$. 
\end{corollary}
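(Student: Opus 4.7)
The plan is to compare, regime by regime, the three-branch inner bound of Theorem~\ref{thm:Achievability} with the three-branch outer bound of Theorem~\ref{thm:outerbound} and to isolate precisely the subregime where a gap survives. Since both theorems partition the parameter space by $\alpha = m/n$ in the same way, and each branch is expressed as the minimum of a handful of closed-form terms, the corollary reduces to matching expressions term-by-term inside each of the three branches.

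First, for $\alpha \in [2/3, 2]$ both bounds read $\min\{\max\{2n-m,m\}, 2f\}$, and for $\alpha \in [2, \infty]$ both bounds read $\min\{n+f+(\bar n -f)^{+}, 2n+2\bar n, m, 2f\}$. These are syntactically identical in the statements of the two theorems, so no further comparison is needed: the sum capacity is pinned down in both of these regimes.

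The only substantive comparison is for $\alpha \in [0, 2/3]$. Here the outer bound is $\min\{2\max\{n-m,m\} + 2\max\{\bar n,\bar m\},\, 2n-m,\, 2f\}$ while the inner bound $R_{\text{sum}}^{\text{FBXw}}$ is $\min\{2\max\{n-m,m\} + 2\bar m,\, 2n-m,\, 2f\}$. The terms $2n-m$ and $2f$ coincide across the two bounds, and the remaining terms differ only in whether $\max\{\bar n,\bar m\}$ or just $\bar m$ appears. When $\bar m \geq \bar n$, we have $\max\{\bar n,\bar m\} = \bar m$, so the two expressions collapse to the same value and the sum capacity is again characterized. When $\bar m < \bar n$, the outer-bound term $2\max\{n-m,m\}+2\bar n$ can be strictly smaller than the other outer-bound terms while being strictly larger than the corresponding inner-bound term $2\max\{n-m,m\}+2\bar m$, so a gap between the bounds may remain. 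This isolates the open subregime $(\alpha < 2/3,\; 0 \leq \bar m < \bar n)$ claimed by the corollary, and unions the rest into the matching region.

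The main obstacle is really just careful bookkeeping rather than any new argument: one needs to check that the only place the outer and inner bound expressions disagree is the single $\max\{\bar n,\bar m\}$ versus $\bar m$ substitution in the weak-interference branch, and then observe that this substitution is vacuous precisely when $\bar m \geq \bar n$. Everything else follows directly by taking minima of identical terms in Theorems~\ref{thm:outerbound} and~\ref{thm:Achievability}.
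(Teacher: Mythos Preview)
Your proposal is correct and is exactly the comparison the paper intends when it says the corollary follows ``immediately'' from combining Theorems~\ref{thm:outerbound} and~\ref{thm:Achievability}: term-by-term matching in the $\alpha\in[2/3,2]$ and $\alpha\in[2,\infty]$ branches, and in the $\alpha\in[0,2/3]$ branch identifying the single discrepancy $\max\{\bar n,\bar m\}$ versus $\bar m$, which vanishes precisely when $\bar m\geq\bar n$. There is nothing more to add.
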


\begin{remark}
  Note that, when $ f = \infty$, our setting reduces to the setting of
  the two-way interference channel in \cite{SWT:12}, for which the
  capacity also remains open when $(m/n < 2/3, \ 0 \leq \bar{m} <
  \bar{n} )$.
\end{remark}

We also have the following result for the special case when
$\bar{m}=0$, i.e. when there is no cross-link in the backward
interference channel.

\begin{theorem} [Capacity, $\bar{m}=0$] \label{thm:m0} The sum
  capacity of the system defined in Section~\ref{sec:system} when
  $\bar{m} =0$ is given by
\begin{align*}  
\C_{\text{sum}}  =
\begin{cases}
     R_{\text{sum}}^{\text{RSw}}      &   \text{for }  \alpha \in [0,  2/3] \\
    R_{\text{sum}}^{\text{NOm}}    &  \text{for  $ \alpha \in [2/3,  2]$} \\
   R_{\text{sum}}^{\text{RSs}}     &  \text{for }  \alpha \in [2,  \infty] 
\end{cases}
\end{align*}
where
\begin{align}
R_{\text{sum}}^{\text{RSw}}   \defeq  \min \Bigl\{  &   f+ \max\{ n-m,m \}  + (  \bar{n} - f )^{+} ,     \nonumber\\ &  2\max\{ n-m,m \} +  2\bar{n},      2n-m ,   2f   \Bigr\}   \label{eq:achieRSw} 
\end{align}
and $R_{\text{sum}}^{\text{NOm}}$ and $R_{\text{sum}}^{\text{RSs}} $ are defined in Theorem~\ref{thm:Achievability}.
\end{theorem}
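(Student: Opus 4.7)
The three regimes decompose as follows. For $\alpha \in [2/3, 2]$ and $\alpha \in [2, \infty)$ the claim is immediate: the bounds in Theorems~\ref{thm:outerbound} and~\ref{thm:Achievability} already match in these regimes by Corollary~\ref{thm:optSCapacity}, and neither $R_{\text{sum}}^{\text{NOm}}$ nor $R_{\text{sum}}^{\text{RSs}}$ changes when we substitute $\bar{m}=0$. The nontrivial case is the weak interference regime $\alpha \in [0, 2/3]$ with $\bar{m} = 0$, which lies precisely inside the gap left open by Corollary~\ref{thm:optSCapacity}; all the new work, one new achievable scheme and one new outer bound, will be concentrated here.

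For achievability of $R_{\text{sum}}^{\text{RSw}}$, the plan is to design a rate-splitting scheme (``RSw'') that resolves the dual role of each relay in the absence of a backward cross-link. Since $\bar{m}=0$, feedback can travel only over the direct backward link, so $R_i$ must split its $f$ transmit levels into a feedforward portion destined for $D_i$ and a feedback portion that, when received at $S_i$, reveals (after $S_i$ subtracts its own contribution) bits of the interfering source's message. Whenever $\bar{n} > f$ the excess $(\bar{n} - f)^{+}$ backward levels can be filled by a dedicated feedback sub-phase or by block repetition, which is exactly what the term $(\bar{n} - f)^{+}$ is counting. In the subsequent block each source pre-cancels the upper interference levels at the opposite relay, lifting the effective first-hop per-user rate to roughly $\max\{n - m, m\}$. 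Pairing this with the second-hop throughput $f$ produces the four terms inside the $\min$ that defines $R_{\text{sum}}^{\text{RSw}}$; the detailed bit-level allocation would be deferred to the appendix.

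For the converse, three of the four bounds already appear in Theorem~\ref{thm:outerbound} evaluated at $\bar{m}=0$. The new one, $f + \max\{n - m, m\} + (\bar{n} - f)^{+}$, I plan to derive by a genie-aided cut-set argument: provide $D_1$ with side information of the form $(W_2, Y_{S2}^{N})$ and expand $N(R_1 + R_2)$ via Fano's inequality, the chain rule, and the deterministic structure of the channel, so that the sum rate decomposes into an $f$-bit per channel use second-hop contribution, a $\max\{n-m, m\}$-bit per channel use interference-free first-hop residue, and an $(\bar{n} - f)^{+}$ correction accounting for the feedback capacity of $\bar{n}$ that exceeds the forward relay bottleneck $f$. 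The main obstacle is twofold. First, RSw has to be engineered so that its splits saturate all four terms uniformly across the sub-regimes of $(n, m, \bar{n}, f)$, which is delicate because the two relay roles trade off differently depending on whether $f$, $\bar{n}$, or $\max\{n-m,m\}$ is the binding quantity. Second, the converse genie has to be chosen so that the correction $(\bar{n} - f)^{+}$ emerges cleanly from a single bounding chain, reflecting that when $\bar{n} \leq f$ the backward direct link itself bottlenecks feedback, whereas when $\bar{n} > f$ the forward relay link absorbs the feedback within a single transmission block.
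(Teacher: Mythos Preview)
Your overall decomposition into three regimes is correct, and the observation that only $\alpha\in[0,2/3]$ with $\bar m=0$ requires new work matches the paper exactly. Your achievability sketch for $R_{\text{sum}}^{\text{RSw}}$ is also in line with the paper's scheme $\Xc_{\text{RSw}}$: rate-splitting the relay signal between feedforward and direct-link feedback, with the $(\bar n-f)^+$ levels carrying pure feedback when $\bar n>f$.

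The converse, however, is where your plan diverges from the paper, and where I think your proposed genie is not the right tool. You suggest handing $D_1$ the side information $(W_2,Y_{S2}^N)$ and then reading off $f+\max\{n-m,m\}+(\bar n-f)^+$ from a single chain. The paper does \emph{not} use an asymmetric genie of this kind; instead it starts symmetrically from $I(W_1;Y_{R1}^N)+I(W_2;Y_{R2}^N)$ and the entire argument hinges on a structural lemma (their Claim~1): when $\bar m=0$ one has $H(Y_{R1}^N\mid W_1)=H(X_{S2,1:m}^N\mid W_1)$ (and symmetrically for $R_2$). This identity is what converts the raw first-hop entropy into an $m$-bit interference term, which after adding $X_{S1,1:m}^N$ and chaining yields $H(Y_{R1}^N\mid X_{S1,1:m}^N)\le N\max\{n-m,m\}$. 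The $(\bar n-f)^+$ correction then emerges from $H(X_{R1,1:\bar n}^N\mid Y_{D1}^N)$, and the $f$ term from $H(Y_{D1}^N)$; the whole thing doubles and one divides by two at the end.

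The concern with your genie $(W_2,Y_{S2}^N)$ is that it does not obviously isolate the $\max\{n-m,m\}$ residue. Giving $W_2$ alone (as in the paper's proof of the bound $n+f+(\bar n-f)^+$) produces an $n$-bit first-hop term, not $\max\{n-m,m\}$; adding $Y_{S2}^N$ (which for $\bar m=0$ is just $X_{R2,1:\bar n}^N$) gives you information about $R_2$'s \emph{output}, but the reduction from $n$ to $\max\{n-m,m\}$ requires conditioning on the \emph{interference pattern} $X_{S1,1:m}^N$ at $R_1$, and it is not clear how your genie delivers that. The paper's route through Claim~1 is precisely the mechanism that makes this substitution legitimate, and it uses $\bar m=0$ in an essential way (so that $X_{S1}[t]$ is a function of $\{W_1,X_{R1}^{t-1}\}$ alone). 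If you want to pursue a genie argument, you will likely need to rediscover this identity or an equivalent one; as written, your converse sketch is missing the key idea.
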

\vspace{5pt}

$R_{\text{sum}}^{\text{FBXw}}$, $R_{\text{sum}}^{\text{RSw}}$, $R_{\text{sum}}^{\text{RSw}}$ and  $R_{\text{sum}}^{\text{NOm}}$ are the sum rates achieved by various strategies we develop in the next section. Among them, the strategy achieving rate $R_{\text{sum}}^{\text{FBXw}}$ utilizes the backward cross-link for feedback and is optimal in the regime of $(\alpha \in [0,  2/3]$, $\bar{m} \geq \bar{n})$. This strategy is designed to carefully consolidate the two roles of the relay signal so as to provide maximal feedback for the first hop without sacrificing the feedforward capacity over the second hop. Sum rates $R_{\text{sum}}^{\text{RSw}}$ and $R_{\text{sum}}^{\text{RSs}}$ are achieved by feeding back information over the backward direct-link (which leads to rate splitting between feedforward and feedback communication) and are optimal in the regimes of ($ \alpha \in [0,  2/3], \bar{m} =0$) and of ($\alpha \in [2,  \infty] $) respectively.
Finally, $R_{\text{sum}}^{\text{NOm}}$ refers to the no feedback rate which achieves the capacity in the intermediate interference regime ($\alpha \in [ 2/3, 2]$), since feedback is not useful here.

\begin{figure}
\centering
\includegraphics[width=9cm]{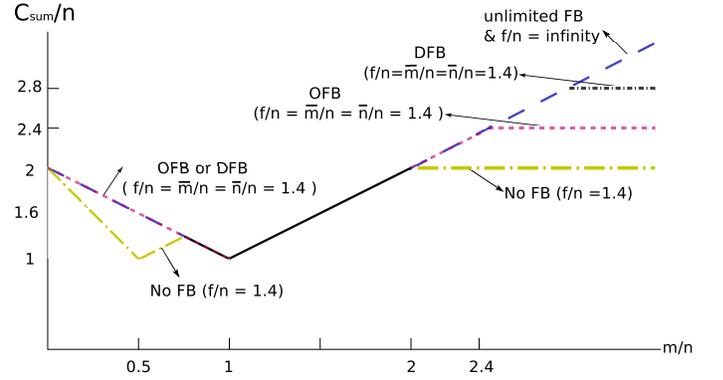}
\caption{Sum capacity comparison between the cases with overheard feedback (OFB), dedicated feedback (DFB), and no feedback (NoFB) for a specific choice of the system parameters.}
\label{fig:FWFB_DeFB_NoFB1}
\end{figure}

\begin{figure}
\centering
\includegraphics[width=9cm]{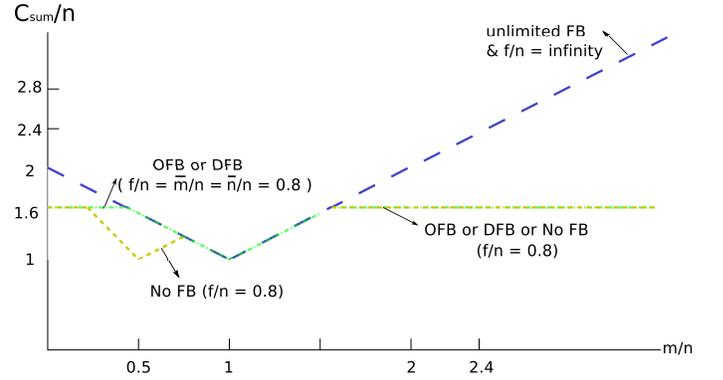}
\caption{Sum capacity comparison between the cases with overheard feedback (OFB), dedicated feedback (DFB), and no feedback (NoFB), for a different set of parameters. The difference between this and the earlier figure is that here $f$, the capacity of the second hop is chosen to be smaller hence in the very weak and strong interference regimes the overall capacity of the system becomes limited by the second hop capacity. In these cases, not surprisingly OFB, DFB and NoFB have the same performance since they only impact the capacity of the first hop.}
\label{fig:FWFB_DeFB_NoFB2}
\end{figure}

Finally, in Figures~\ref{fig:FWFB_DeFB_NoFB1} and \ref{fig:FWFB_DeFB_NoFB2} we compare the sum capacity of our setup to an equivalent setup with dedicated feedback. In this second case, we keep the parameters ($m,n,f,\bar{m}, \bar{n}$) of the system in Fig~\ref{fig:TwoHopIC} the same but assume the forward links over the second hop are decoupled from the backward interference channel which is now dedicated only to feedback. Interestingly, in the regime where ($ \alpha \in [0,  \frac{2}{3}] $ and  $\bar{m}\geq \bar{n}$), overheard-feedback can achieve the same performance as with dedicated-feedback. This is because our achievable strategy is able to consolidate the feedforward and feedback roles of the relay signal without sacrificing the capacity of neither the first or the second hop. On the other hand, in the strong interference regime for the forward channel, i.e. when $ \alpha \in[2,  \infty]$, performance with overheard-feedback suffers a capacity loss compared to the case with dedicated-feedback. Here the two roles of the relay signal are not compatible and as indicated by the upper bound the best strategy is to optimally rate-split between these two roles. However, the optimal rate-splitting strategy is highly non-trivial and has to be carefully designed as detailed in the next section.

\section{Sketches of the achievability proof   }\label{sec:scheme}

In this section we illustrate the strategies we propose in
Theorem~\ref{thm:Achievability} and Theorem~\ref{thm:m0} through specific examples. 
A detailed description of the strategies can be found in the appendix.

\subsection{Utilizing cross-link overhearing when $\alpha \leq 2/3$  \label{sec:schFBXw}  }

The scheme (named scheme $\Xc_{\text{\text{FBXw}}}$) achieving the rate $R_{\text{sum}}^{\text{FBXw}}$ in
\eqref{eq:achieFBXw} is designed for the case when the first hop is in
the weak interference regime ($\alpha \leq 2/3$) and makes use of the
cross-link feedback for increasing the capacity of the first hop. We
describe the scheme for the special case $m=2, n=4, f=3, \bar{m} =
\bar{n} =1$. The corresponding linear deterministic model for the
first hop is given in Fig.~\ref{fig:Scheme1Example12}, and the model
for the backward interference channel and the second hop is given in
Fig.~\ref{fig:Scheme1Example12FB}. Note that the channel from each
relay to its corresponding destination and the source nodes in Fig.~\ref{fig:Scheme1Example12FB} is a
broadcast channel (the bit levels with the same color correspond to the same signal in this figure).

Our scheme for this case operates in packets of 6 bits per S-D pair. Each packet is
associated with four phases. Among these phases, Phase 1 and Phase 4
involve communication over the first hop while Phase 2 and Phase 3
involve communication over the second hop (and also the backward
interference channel since it represents the overheard signals over
the second hop). See Fig.~\ref{fig:BlocksFWFB}. At the end of these four
phases, the 6 bits of each S-D pair can be decoded by its
corresponding relay. The relaying of these 6 bits to the final
destinations is partially accomplished during the Phases 2 and 3 of
the current packet and the rest is completed during the Phases 2 and 3 of
a future packet.

\begin{figure}
\centering
\includegraphics[width=8.9cm]{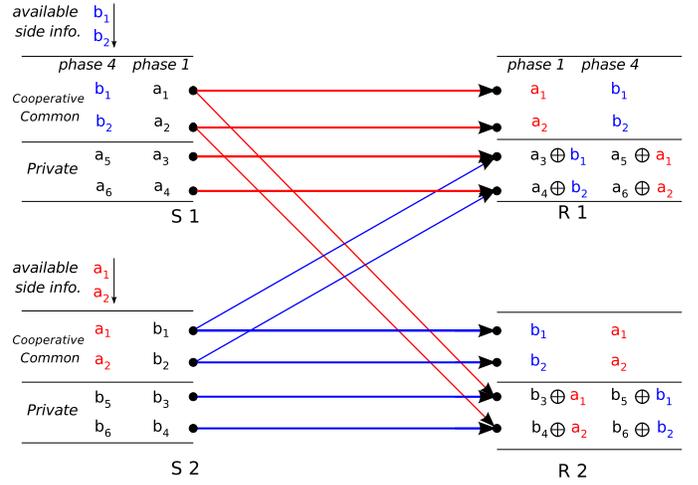}
\caption{The first hop transmission (phase~1 and phase~4) illustration  for scheme $\Xc_{\text{\text{FBXw}}}$ ($m=2, n=4, f=3,  \bar{m} = \bar{n} =1$). }
\label{fig:Scheme1Example12}
\end{figure}

\begin{figure}
\centering
\includegraphics[width=8.9cm]{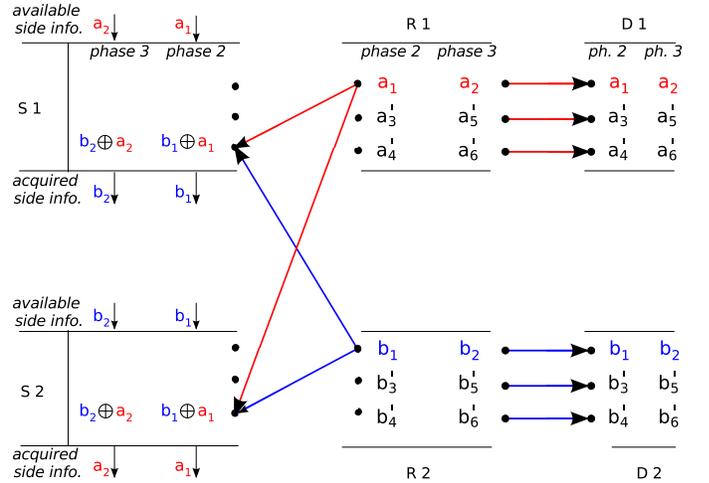}
\caption{The second hop transmission (phase~2 and phase~3) and overhearing illustration  for scheme $\Xc_{\text{\text{FBXw}}}$ ($m=2, n=4, f=3,  \bar{m} = \bar{n} =1$). The bit levels corresponding to the same signal are shown in the same color.} 
\label{fig:Scheme1Example12FB}
\end{figure}

Let $\{a_t \}_{t=1}^{6}$ and $\{b_t\}_{t=1}^{6}$ denote the $6$ bits
of S1 and S2 destined to D1 and D2 respectively. We next describe the
four phases associated with this packet.

\textit{Phase 1:} Each source sends two cooperative common bits (over the upper
two signal levels) and two private bits (over the lower two signal
levels). The cooperative common bits $a_1, a_2$ of S1 and $b_1, b_2$ of S2 are
received interference free at the corresponding relays, while the
private bits, $a_3, a_4$ of S1 and $b_3, b_4$ of S2, which are only
visible to the corresponding relays, arrive interfered with the common
bits of the other source node. ($a_1, a_2$ are called common bits in the sense that they can be decoded by both relays. They are called cooperative in the sense that they are known/learned by both source nodes. Private bits are known only by the corresponding source and relay nodes.) See Fig.~\ref{fig:Scheme1Example12}. In
the next two phases, each source node will learn the interfering
(common) bits of the other source node, which will allow to resolve
the interference in the fourth phase.

\textit{Phases 2 and 3:} In the beginning of these phases the relays
have recovered the cooperative common bits of their respective source nodes. These
common bits need to be both forwarded to their destination and fed
back to the source nodes to resolve interference. We use the upper
most level of the relay signals to accomplish both of these goals at
the same time. R1 transmits $a_1$ and $a_2$ over its uppermost signal
level in Phase 2 and Phase 3 respectively, while R2 transmits
similarly $b_1$ and $b_2$. Since S1 already knows $a_1$ and $a_2$ and
S2 already knows $b_1$ and $b_2$, S1 can decode $b_1$ and $b_2$ and S2
can decode $a_1$ and $a_2$ from the linear combinations $a_1\oplus
b_1$ and $a_2\oplus b_2$ which they gather during these two
phases. Note that we fully exploit the common uppermost bit level by
using it both for forward and backward information transmission. The
remaining two bit levels of the forward channel are utilized to send
the four bits $a'_3, a'_4, a'_5, a'_6$ of S1 and $b'_3,
b'_4, b'_5, b'_6$ of S2 from the packet before last. These are
the bits of an earlier packet (say, the packet before the last one), that have been decoded by the relays
but have not been forwarded to the destinations yet.

\textit{Phase 4:} Having learned the cooperative common bits of S2, S1 transmits $b_1, b_2$ and two fresh  bits  $a_5, a_6$ from its upper and lower signal levels respectively,  while S2 sends  $a_1, a_2$ and two fresh bits  $b_5, b_6$ in a similar way. Note that this allows R1 to resolve the interference and recover $a_3, a_4$ from Phase 1 and $a_5, a_6$ from the current phase, and similarly for R2. Note that while R1 and R2 can decode $\{a_t\}_{t=3}^6$ and $\{b_t\}_{t=3}^6$ respectively, these bits have not been communicated yet to their final destinations. This is accomplished in the Phases 2 and 3 corresponding to a future packet (say, the packet after next), similarly to $\{a'_t\}_{t=3}^6$ and $\{b'_t\}_{t=3}^6$  which were transmitted in the Phases 2 and 3 corresponding to the current packet. See the timeline in Fig.~\ref{fig:BlocksFWFB}.

\medbreak

With this strategy $12$ bits can be communicated to their destinations in every 2 channel uses (assuming a large number of packets and ignoring the effect of the first two and the last two packets). This yields a sum rate of $6$ bits per channel use, which turns out to be optimal for this channel. See Theorems~\ref{thm:outerbound} and \ref{thm:Achievability}.

\begin{figure}
\centering
\includegraphics[width=8.9cm]{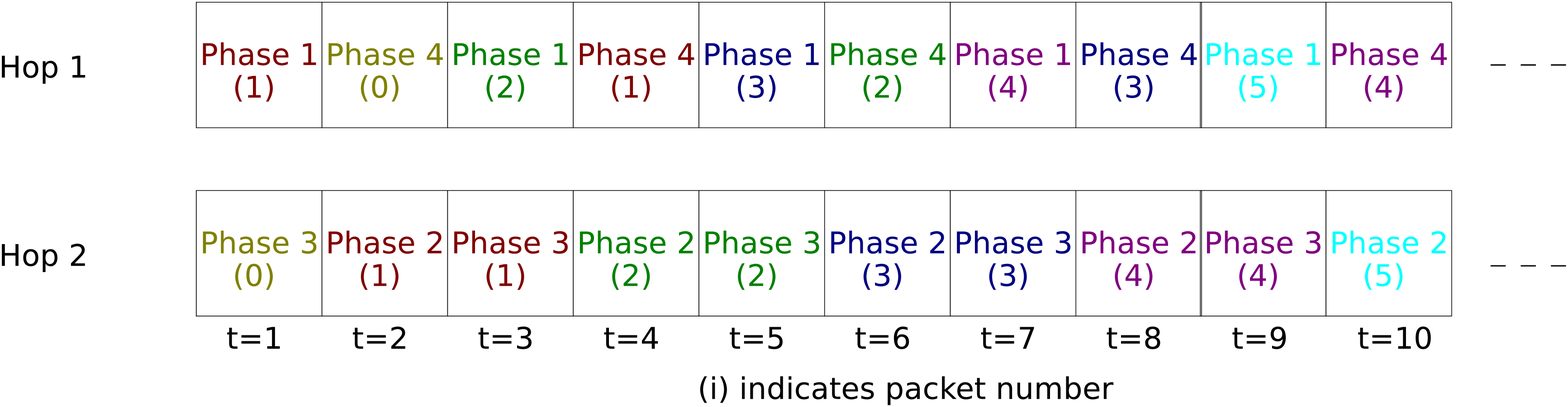}
\caption{Schematic of phases and packets: For packet~$i$, $i=1,2,\cdots$, phase~1 (at time~$t=2i-1$) and phase~4 (at time~$t=2i+2$) correspond to hop~1 transmission, while phase~2 (at time~$t=2i$) and phase~3 (at time~$t=2i+1$) correspond to hop~2 transmission. Different colors refer to different packets.}
\label{fig:BlocksFWFB}
\end{figure}


\subsection{Utilizing direct-link overhearing when $(\alpha \leq 2/3, \bar{m}  = 0) $ \label{sec:schRSw} }

The scheme  (named scheme $\Xc_{\text{\text{RSw}}}$) achieving the rate $R_{\text{sum}}^{\text{RSw}}$ in
\eqref{eq:achieRSw} is designed for the case when the first hop is in
the weak interference regime and there is no cross-link in the backward
interference channel ($\alpha \leq 2/3, \bar{m} = 0 $). In contrast to
the previous scheme, this scheme makes use of the direct-link feedback
from R1 to S1 and from R2 to S2 for increasing the capacity of the first hop. In this case there is a tradeoff between utilizing the relay bits for conveying information to
the destinations versus feeding back information to the source nodes,
which results in rate splitting. We describe the scheme for the
special case $m\!=\!2, n\!=\!4, f\!=\!3, \bar{n} \!=\!1, \bar{m} \!=\!
0 $. The corresponding linear deterministic models for the channels in the first and the second hops are given in Fig.~\ref{fig:Scheme3EgPh1} and Fig.~\ref{fig:Scheme3EgFB} respectively.

As before, this scheme operates in packets, in this case of $5$ bits per S-D pair and each packet is
associated with four phases, \emph{i.e.,} Phase 1 and Phase 4 involve
communication over the first hop while Phase 2 and Phase 3 involve
communication over the second hop and the backward channel. See
Fig.~\ref{fig:BlocksFWFB}. 
At the end of these four phases, the $5$ bits of each S-D pair can be
decoded by its corresponding relay. The relaying of these $5$ bits to
the final destination will again be accomplished partially in the
Phases 2 and 3 of the current packet, and the rest in Phases 2 and 3
of the packet after next.
Let $\{a_t \}_{t=1}^{5}$ and $\{b_t\}_{t=1}^{5}$ denote the $5$ bits
of S1 and S2 destined to D1 and D2 respectively. We next describe the
four phases associated with this packet.

\begin{figure}
\centering
\includegraphics[width=8.9cm]{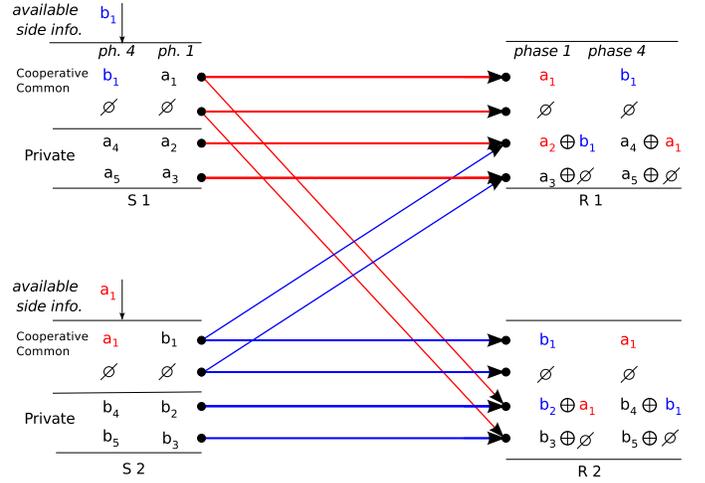}
\caption{The first hop transmission (phase~1 and phase~4) illustration  for scheme $\Xc_{\text{\text{RSw}}}$ ($m=2, n=4, f=3,  \bar{n} =1, \bar{m} = 0$). }
\label{fig:Scheme3EgPh1}
\end{figure}

\begin{figure}
\centering
\includegraphics[width=8.9cm]{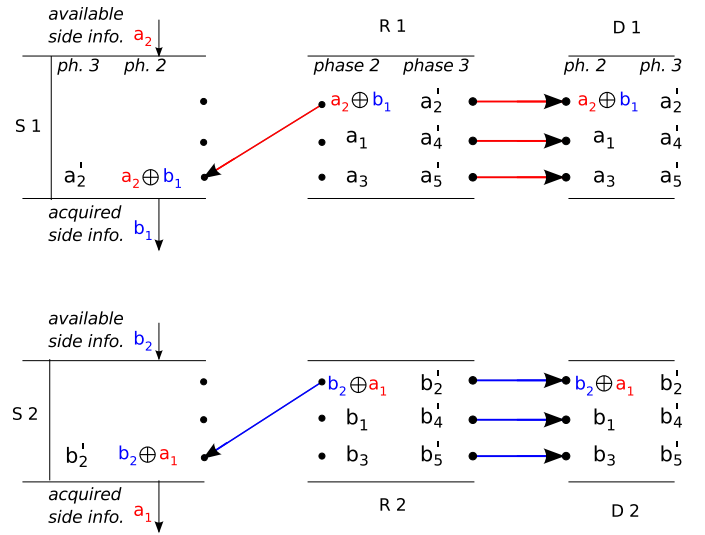}
\caption{The second hop transmission (phase~2 and phase~3) and overhearing illustration  for scheme $\Xc_{\text{\text{RSw}}}$ ($m=2, n=4, f=3,  \bar{n} =1, \bar{m} =0 $). } 
\label{fig:Scheme3EgFB}
\end{figure}

\textit{Phase 1:} Each source sends one cooperative common bit (over its uppermost signal level) and two private bits (over its lowest two  signal levels). The second bit level is not utilized and is fixed to $0$. The most significant bit, $a_1$ of S1 and $b_1$ of S2,  as well as the least significant bit, $a_3$ of S1 and $b_3$ of S2,  are received interference free at the corresponding relays, while the remaining private bit, $a_2$ of S1 and $b_2$ of S2,  arrives interfered with the common bit of the other source node. See Fig.~\ref{fig:Scheme3EgPh1}. In the next two phases, each source node will learn the interfering (common) bit of the other source node, which will allow to resolve the interference in the fourth phase.

\textit{Phase 2 and 3:} In the beginning of  Phase~2, each relay has recovered the cooperative common bit and one private bit of its respective source node, and these bits need to be forwarded to their destination. As before, each source needs to learn the common information of the other source node in order to resolve interference. However, since $\bar{m}=0$ this information needs to be  fed back  over the backward direct-link rather than the backward cross-link. In phase 2, R1 and R2 feed back $a_2\oplus b_1$ and  $b_2\oplus a_1$, received in the previous phase,  to S1 and S2 respectively through the backward  direct-link (uppermost signal level). Then the common bits $b_1$ and $a_1$ can be decoded by S1 and S2 respectively by using the side information at each source node ($a_1$ of S1, $b_1$ of S2). On the remaining lower two signal levels  R1 transmits $ a_1$ and $a_3$, while R2 transmits $b_1$ and $b_3$. During Phase~3, the relay signal is fully utilized for forward information transmission. R1 sends the three bits $a'_2,  a'_4, a'_5$ of S1, and R2 sends the bits $b'_2,  b'_4, b'_5$ of S2, which belong to an earlier packet. Note that in this case the relay bits are split between feedforward and feedback communication. In Phase 2, the uppermost bit level of the relay is used solely for feedback and does not provide any useful information to the corresponding destination. Similarly in Phase 3, this bit level is used for sending information to the destination and does not provide any useful feedback information for the source node.

\textit{Phase 4:} Having learned the cooperative common bit of the other source node, S1 transmits $b_1$ and two fresh  bits  $a_4, a_5$ from its upper and lower signal levels respectively (the second bit level is again not utilized),  while S2 sends  $a_1$ and two fresh bits  $b_4, b_5$ in a similar way. Note that this allows R1 to resolve the interference and recover $a_2$ from Phase 1 and $a_4, a_5$ from the current phase, and similarly for R2. Note that while R1 and R2 can decode $\{a_2,a_4,a_5\}$ and $\{b_2,b_4,b_5\}$ respectively, these bits have not been communicated yet to their final destinations. This is accomplished in the Phase 3 corresponding to the packet after next.

\medbreak

With this strategy $10$ bits are communicated to their destinations in every 2 channel uses. This yields a sum rate of $5$ bits per channel use which is optimal for this channel. See Theorem~\ref{thm:m0}.

\subsection{Utilizing direct-link overhearing when  $\alpha \geq 2$ \label{sec:schRSs} }

\begin{figure}
\centering
\includegraphics[width=8.9cm]{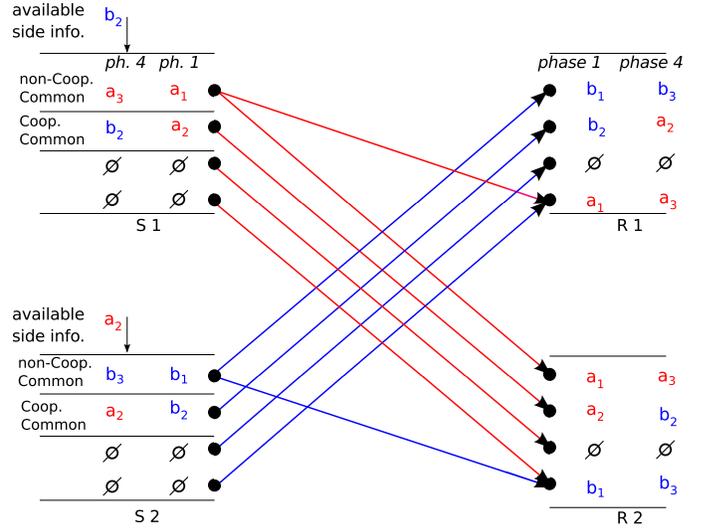}
\caption{The first hop transmission (phase~1 and phase~4) illustration  for scheme $\Xc_{\text{\text{RSs}}}$ ($m=4, n=1, f=2,  \bar{n} =1, \bar{m} = 1$). }
\label{fig:Scheme2EgPh1}
\end{figure}

\begin{figure}
\centering
\includegraphics[width=8.9cm]{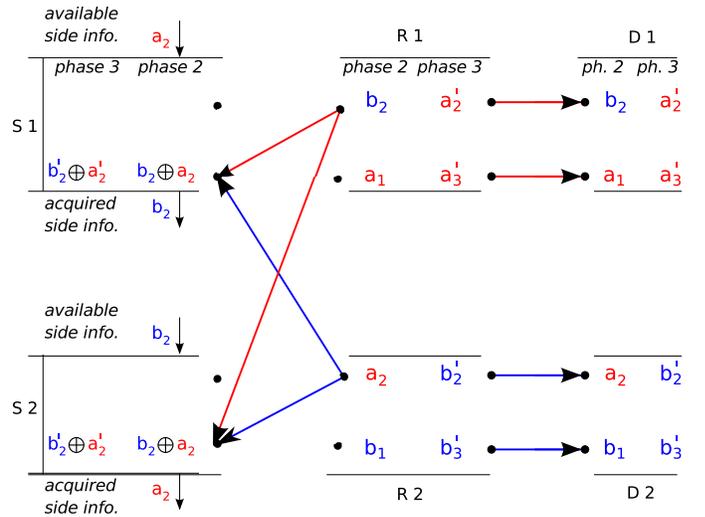}
\caption{The second hop transmission (phase~2 and phase~3) and overhearing illustration  for scheme $\Xc_{\text{\text{RSs}}}$ ($m=4, n=1, f=2,  \bar{n} =1, \bar{m} = 1$). } 
\label{fig:Scheme2EgFB}
\end{figure}

The scheme  (named scheme $\Xc_{\text{\text{RSs}}}$) achieving the rate $R_{\text{sum}}^{\text{RSs}}$ in \eqref{eq:achieRSs} is designed for the case when the first hop is in the strong interference regime  ($\alpha \geq 2$). 
The scheme is briefly illustrated in Fig.~\ref{fig:Scheme2EgPh1} and Fig.~\ref{fig:Scheme2EgFB} for the special case $m=4, n=1, f=2,  \bar{n} =1, \bar{m} = 1$. 
Similarly to the previous scheme, this four-phase scheme is based on direct-link overhearing and it uses rate splitting at the relay for feedback and feedforward to achieve the optimal sum rate $3$ bits/channel use. The difference between this scheme and the earlier one is that the while in the earlier scheme S1 uses its side information about the cooperative bit of S2 to resolve the interference at R1 in the next phase, here S1 uses its side information to enhance the communication between S2 and R2 by communicating the cooperative bit to R2. This is similarly the case for the interference channel with standard dedicated feedback.

As before, this scheme operates in packets, in this case of $3$ bits per S-D pair, and each packet is
associated with four phases. See
Fig.~\ref{fig:BlocksFWFB}. 
At the end of these four phases, the $3$ bits of each S-D pair can be
decoded by its corresponding relay. The relaying of these $3$ bits to
the final destination will again be accomplished partially in the
Phases 2 and 3 of the current packet, and the rest in Phases 2 and 3
of the packet after next.
Let $\{a_t \}_{t=1}^{3}$ and $\{b_t\}_{t=1}^{3}$ denote the $3$ bits
of S1 and S2 destined to D1 and D2 respectively. We next describe the
four phases associated with this packet.

\textit{Phase 1:} Each source sends one non-cooperative common bit over its uppermost signal level and one cooperative common bit over its  second upper most signal level. The rest of the levels are not utilized and are fixed to $0$. The non-cooperative common bit, $a_1$ of S1 and $b_1$ of S2,  is received interference free at both relays, while the cooperative common bit, $a_2$ of S1 and $b_2$ of S2,  is received (interference free) at the unintended relays, i.e. R2 and R1 respectively. See Fig.~\ref{fig:Scheme2EgPh1}.  In order to improve the first hop capacity, in the next two phases the  cooperative common bit  $a_2$ will be fed back from R2 to S2 such that it can be sent from S2 to R1 in the fourth phase,  while the cooperative common bit  $b_2$ will be fed back from R1 to S1 such that it can be sent from S1 to R2 in the fourth phase.

\textit{Phase 2 and 3:} At the beginning of  Phase~2, each relay has recovered one non-cooperative common bit of its corresponding  source and one cooperative common bit of the other source node. 
During Phase~2, R1 and R2 feed back the cooperative common bits $ b_2$ and  $a_2$  to S1 and S2 respectively through the backward  direct-link (uppermost signal level). Then the cooperative common bits $b_2$ and $a_2$ can be decoded by S1 and S2 respectively by using the available side information ($a_2$ of S1 and $b_2$ of S2). 
The remaining signal level of the relay signal is utilized to forward the non-cooperative common bits,  $a_1$ of S1 and $b_1$ of S2, to its destination.   
During Phase~3, the relay signal is fully utilized for forward information transmission. R1 sends the two bits $a'_2,  a'_3$ of S1, and R2 sends the bits $b'_2,  b'_3$ of S2, which belong to the packet before last.  Note that in this case the relay bits are split between feedforward and feedback communication. In Phase 2, the uppermost bit level of the relay is used solely for feedback and does not provide any useful information to the corresponding destination. Similarly in Phase 3, this bit level is used for sending information to the destination and does not provide any useful feedback information for the source node.

\textit{Phase 4:} Having learned the cooperative common bit of the other source node, S1 transmits $b_2$ and fresh  bit $a_3$ from its upper second and uppermost signal levels respectively,  while S2 sends  $a_2$ and fresh bit  $b_3$ in a similar way. Note that this allows R1 to learn  $a_2$ via the loop $S1 \rightarrow R2 \rightarrow S2 \rightarrow R1$, and allows R2 to learn $b_2$ via the loop $S2 \rightarrow R1 \rightarrow S1 \rightarrow R2$. Note that while R1 and R2 can decode $\{a_2, a_3\}$ and $\{b_2,b_3\}$ respectively, these bits have not been communicated yet to their final destinations. This is accomplished in the Phase 3 corresponding to the packet after next.

\medbreak

With this strategy $6$ bits are communicated to their destinations in every 2 channel uses. This yields a sum rate of $3$ bits per channel use which is optimal for this channel. See Theorems~\ref{thm:outerbound} and \ref{thm:Achievability}.

\section{Proof of the Outer Bound   \label{sec:outerbound}}

It suffices to  prove the general bounds
\begin{align}    
\C_{\text{sum}} \leq    \min  \bigl\{  &   n +f + ( \bar{n} -f )^{+},   \label{eq:GeneralOB1}   \\   &    2n+ 2\bar{n},    \label{eq:GeneralOB2} \\  &     2\max\{n-m,m \}+2 \max\{ \bar{n},\bar{m} \},  \label{eq:GeneralOB3}  \\ &  \max\{n,m\}+ (n-m)^{+},  \label{eq:GeneralOB4}  \\ &  2f \label{eq:GeneralOB5}   \bigr\}   
\end{align}
for any $f, n, m, \bar{n}, \bar{m}$, and the special bound
\begin{align}    
\C_{\text{sum}} \leq   f+  \max\{ n-m,  m \} + ( \bar{n} -f )^{+}  \  \text{for} \  \bar{m} =0. \label{eq:GeneralOB6}       
\end{align}
Note that the outer bound in Theorem~\ref{thm:outerbound} follows by evaluating the outer bounds \eqref{eq:GeneralOB2}-\eqref{eq:GeneralOB6} for different regimes of $\alpha$. 

The outer bound \eqref{eq:GeneralOB5} follows by observing that the sum capacity of the whole system can not be more than the second hop sum capacity $2f$. The outer bounds \eqref{eq:GeneralOB2}, \eqref{eq:GeneralOB3}, \eqref{eq:GeneralOB4} are the bounds on the sum capacity of the interference channel in the first hop   from \cite{SWT:12, VSA:12} with dedicated feedback which consequently serve as outer bounds for our case with overheard feedback. In the following, we concentrate on proving the outer bounds \eqref{eq:GeneralOB1} and \eqref{eq:GeneralOB6}.  For notational convenience, we let $ Y^{N}_{Q} \defeq \{Y_{Q}[t]\}_{t=1}^{N}$ for $Q \in \{R1,R2,S1,S2,D1,D2\}$; and let $ X^{N}_{Q} \defeq \{X_{Q}[t]\}_{t=1}^{N}$, $X_{Q,1:\tau}[t] \defeq   \Bmatrix{ X_{Q,1}[t], \cdots, X_{Q,\tau}[t] }^\T $, $ X^{N}_{Q,1:\tau} \defeq \{X_{Q,1:\tau}[t]\}_{t=1}^{N}$ for $Q \in \{R1,R2,S1,S2\}$, $\tau \in \{m,n,\bar{m},\bar{n},f\}$. Note that, $X_{S1,1:n}[t]$ is the part of $X_{S1}[t]$ visible to R1, while $X_{S1,1:m}[t]$ is visible to R2.

\subsection{Proof of outer bound  \eqref{eq:GeneralOB1} }

Starting with Fano's inequality, we have
\begin{align}
& N (R_1+ R_2-   \epsilon_N) \nonumber\\
&\leq I(W_1; Y^{N}_{D1})  +  I(W_2; Y^{N}_{D2}) \label{eq:fano302}\\
&\leq I(W_1; Y^{N}_{R1})  +  I(W_2; Y^{N}_{D2}) \label{eq:MarkovYS}\\
&\leq I(W_1; Y^{N}_{R1}, W_2)  +  I(W_2; Y^{N}_{D2})  \label{eq:entropyadd}\\
&= I(W_1; Y^{N}_{R1}| W_2) +  \underbrace{ I(W_1; W_2)}_{0}   +  I(W_2; Y^{N}_{D2})     \nonumber\\
&= I(W_1; Y^{N}_{R1}| W_2)  +  I(W_2; Y^{N}_{D2})  \label{eq:W1W2ind}\\
&= H(Y^{N}_{R1}| W_2) \! - \! \underbrace{  H(Y^{N}_{R1}| W_2, W_1)}_{0}  \! +\!   H( Y^{N}_{D2}) \! -\! H(Y^{N}_{D2}|W_2)    \nonumber\\
&= H(Y^{N}_{R1}| W_2)  +  H( Y^{N}_{D2})  -H(Y^{N}_{D2}|W_2)   \label{eq:W1W2rec} \\
&= H(Y^{N}_{R1}| W_2)  +  H( Y^{N}_{D2})   \nonumber\\& \quad  -  \bigl(  H(Y^{N}_{D2}, Y^{N}_{R1}|W_2) -H(Y^{N}_{R1}|W_2,Y^{N}_{D2} )  \bigr)  \nonumber\\
&= H(Y^{N}_{R1}| W_2)  +  H( Y^{N}_{D2})  \nonumber\\& \   \   - \!  \bigl(   \!  H(Y^{N}_{R1}|W_2) + H(Y^{N}_{D2}|W_2,Y^{N}_{R1}) -H(Y^{N}_{R1}|W_2,Y^{N}_{D2} )   \bigr)  \nonumber\\
&=   H( Y^{N}_{D2})  + H(Y^{N}_{R1}|W_2,Y^{N}_{D2})-    \underbrace{H(Y^{N}_{D2}|W_2,Y^{N}_{R1})}_{\geq H(Y^{N}_{D2}|W_2,Y^{N}_{R1}, W_1) =0}   \nonumber\\
&\leq   H( Y^{N}_{D2})  + H(Y^{N}_{R1}|W_2,Y^{N}_{D2})  \label{eq:ConditionRe}   \\
&\leq   H( Y^{N}_{D2})  + H(Y^{N}_{R1}, X^{N}_{R2,1:\bar{n}}|W_2,Y^{N}_{D2})  \label{eq:addEntropy}   \\
&=   H( Y^{N}_{D2})  + H(X^{N}_{R2,1:\bar{n}}|W_2,Y^{N}_{D2})  \nonumber\\& \quad + H(Y^{N}_{R1}|W_2,Y^{N}_{D2},X^{N}_{R2,1:\bar{n}})  \label{eq:chainrule235}   \\
&=   H( Y^{N}_{D2})    + H(X^{N}_{R2,1:\bar{n}}|W_2,Y^{N}_{D2}) \nonumber\\& \quad  + \sum H(Y_{R1}[t]|W_2,Y^{N}_{D2},X^{N}_{R2,1:\bar{n}},Y^{t-1}_{R1})  \label{eq:ChainRule}   \\
&=   H( Y^{N}_{D2})  + H(X^{N}_{R2,1:\bar{n}}|W_2,Y^{N}_{D2}) \nonumber\\& \quad   +  \sum H(Y_{R1}[t]|W_2,Y^{N}_{D2},X^{N}_{R2,1:\bar{n}},Y^{t-1}_{R1}, X^{t}_{R1})  \label{eq:fbfunction1}   \\
&=   H( Y^{N}_{D2})  + H(X^{N}_{R2,1:\bar{n}}|W_2,Y^{N}_{D2}) \nonumber\\& \quad  +  \sum H(Y_{R1}[t]|W_2,Y^{N}_{D2},X^{N}_{R2,1:\bar{n}},Y^{t-1}_{R1}, X^{t}_{R1}, Y^{t}_{S2})  \label{eq:fbfunction21}   \\
&=   H( Y^{N}_{D2})  + H(X^{N}_{R2,1:\bar{n}}|W_2,Y^{N}_{D2})  \nonumber\\& \   +  \! \sum \!H(Y_{R1}[t]|W_2,Y^{N}_{D2},X^{N}_{R2,1:\bar{n}},Y^{t-1}_{R1}, X^{t}_{R1}, Y^{t}_{S2}, X^{t+1}_{S2})  \label{eq:fbfunction43}   \\
&=   H( Y^{N}_{D2})  + H(X^{N}_{R2,1:\bar{n}}|W_2,Y^{N}_{D2})  \nonumber\\& \  + \! \!\sum \! H(\!  X_{S1,1:n}[t]|W_2,Y^{N}_{D2}, \! X^{N}_{R2,1:\bar{n}},  Y^{t-1}_{R1}, \! X^{t}_{R1}, \! Y^{t}_{S2}, \! X^{t+1}_{S2} \!)  \label{eq:fbfunction57}   \\
&\leq   H( Y^{N}_{D2})  + H(X^{N}_{R2,1:\bar{n}}|Y^{N}_{D2}) +  \sum H(X_{S1,1:n}[t])  \label{eq:ConditionRe213}   \\
&=   H( Y^{N}_{D2})  + H(X^{N}_{R2,1:\bar{n}}| X^{N}_{R2,1:f}) +  \sum H(X_{S1,1:n}[t])   \\
&\leq  N f  +N  ( \bar{n} -f )^{+}+  N n  \label{eq:sumratefinal}  
\end{align}
where $\epsilon_N \rightarrow 0$ when $N\rightarrow \infty$; 
\eqref{eq:MarkovYS} follows from the fact that $W_1 \rightarrow Y^{N}_{R1} \rightarrow  Y^{N}_{D1}$ forms a Markov chain ($Y^{N}_{D1}$ is a function of $Y^{N}_{R1}$);
\eqref{eq:entropyadd} results from the fact that adding information increases the mutual information;
\eqref{eq:W1W2ind} follows from the independence of the two messages $W_1 $ and $W_2$;
\eqref{eq:W1W2rec} uses  the fact that the knowledge of $\{W_2, W_1\}$ implies the knowledge of $Y^{N}_{R1}$;
\eqref{eq:ConditionRe}  follows from the fact that conditioning reduces entropy, and the fact that the knowledge of $\{W_2, W_1\}$ implies the knowledge of $Y^{N}_{D2}$;
\eqref{eq:addEntropy} follows from the fact that adding information increases the entropy;
\eqref{eq:ChainRule} folows from the chain rule for entropy;
\eqref{eq:fbfunction1} uses the fact that $X_{R1}[t]$ is a function of $Y^{t-1}_{R1}$;
\eqref{eq:fbfunction21} follows from the fact that $Y_{S2}[t]$ is a function of $\{X_{R2,1:\bar{n}}[t],X_{R1,1:\bar{m}}[t] \}$;
\eqref{eq:fbfunction43} results from the fact that $X_{S2}[t]$ is a function of $\{W_2,Y^{t-1}_{S2} \}$;
\eqref{eq:fbfunction57} uses the fact that $Y_{R1}[t]$ is a deterministic function of $\{X_{S1,1:n}[t], X_{S2,1:m}[t] \}$;
\eqref{eq:ConditionRe213}  follows from the fact that conditioning reduces entropy.
From \eqref{eq:sumratefinal}, we finally have the bound $R_1+ R_2\leq n+f+  ( \bar{n} -f )^{+}$ when  $N$ is large.

\subsection{Proof of outer bound  \eqref{eq:GeneralOB6}}

The  outer bound  \eqref{eq:GeneralOB6} is valid for the case without feedback cross-link ($\bar{m} =0$).
In this case, we have
\begin{align}
& N R_1+ N R_2-   N\epsilon_N \nonumber\\
&=   H(W_1)  +  H(W_2)  -  N \epsilon_N \nonumber\\
&\leq I(W_1; Y^{N}_{D1})  +  I(W_2; Y^{N}_{D2}) \label{eq:2fano302}\\
&\leq I(W_1; Y^{N}_{R1})  +  I(W_2; Y^{N}_{R2}) \label{eq:2MarkovYS}\\
& =  H(Y^{N}_{R1}) - H( Y^{N}_{R1}| W_1)  +  H( Y^{N}_{R2}) - H(Y^{N}_{R2}|W_2)  \nonumber\\
& =  H(Y^{N}_{R1}) - H( X^{N}_{S2,1:m}| W_1)  +  H( Y^{N}_{R2}) \!-\! H(X^{N}_{S1,1:m}|W_2)  \label{eq:equal236}  \\
& =  H(Y^{N}_{R1}) - \bigl(  H( X^{N}_{S2,1:m})   -I( X^{N}_{S2,1:m}; W_1) \bigr)   \nonumber\\& \quad +  H( Y^{N}_{R2}) - \bigl( H(X^{N}_{S1,1:m})  -  I(X^{N}_{S1,1:m};W_2)    \bigr)   \nonumber\\ 
& \leq  H(Y^{N}_{R1}, X^{N}_{S1,1:m}) - \bigl(  H( X^{N}_{S2,1:m})   -  I( X^{N}_{S2,1:m}; W_1) \bigr)   \nonumber\\& \quad   +  H( Y^{N}_{R2}, X^{N}_{S2,1:m}) -   \bigl( H(X^{N}_{S1,1:m})  -  I(X^{N}_{S1,1:m};W_2)    \bigr)    \label{eq:addentropy}  \\ 
& =  H(Y^{N}_{R1}, X^{N}_{S1,1:m})  -  H(X^{N}_{S1,1:m})  +  I(X^{N}_{S1,1:m};W_2)    \nonumber\\& \quad  +  H( Y^{N}_{R2}, X^{N}_{S2,1:m}) -  H( X^{N}_{S2,1:m})   + I( X^{N}_{S2,1:m}; W_1)   \nonumber\\ 
& =  H(Y^{N}_{R1}| X^{N}_{S1,1:m})   +  I(X^{N}_{S1,1:m};W_2)    \nonumber\\& \quad  +  H( Y^{N}_{R2}|X^{N}_{S2,1:m})  + I( X^{N}_{S2,1:m}; W_1)  \label{eq:2ChainRule} \\ 
& \leq H(Y^{N}_{R1}| X^{N}_{S1,1:m})   +    I(X^{N}_{S1,1:m}, X^{N}_{R1,1:\bar{n}}, Y^{N}_{D1}, W_1;W_2)   \nonumber\\& \quad   + H( Y^{N}_{R2}|X^{N}_{S2,1:m})  + I( X^{N}_{S2,1:m}, X^{N}_{R2,1:\bar{n}}, Y^{N}_{D2}, W_2; W_1)  \label{eq:2entropyadd} \\ 
& =   I( X^{N}_{R1,1:\bar{n}},Y^{N}_{D1}, W_1;W_2)  + I( X^{N}_{R2,1:\bar{n}}, Y^{N}_{D2}, W_2; W_1)  \nonumber\\& \quad +H(Y^{N}_{R1}| X^{N}_{S1,1:m})   + H( Y^{N}_{R2}|X^{N}_{S2,1:m})  \label{eq:function326}  \\ 
& =   I(Y^{N}_{D1}, W_1;W_2) +I( X^{N}_{R1,1:\bar{n}};W_2|Y^{N}_{D1}, W_1)   \nonumber\\& \quad  + I(  Y^{N}_{D2}, W_2; W_1) + I( X^{N}_{R2,1:\bar{n}}; W_1| Y^{N}_{D2}, W_2) \nonumber\\ &\quad  +H(Y^{N}_{R1}| X^{N}_{S1,1:m})   + H( Y^{N}_{R2}|X^{N}_{S2,1:m})  \label{eq:function9837}  \\ 
& =   I(Y^{N}_{D1}, W_1;W_2)  + I(  Y^{N}_{D2}, W_2; W_1)  \nonumber\\& \quad  +H(Y^{N}_{R1}| X^{N}_{S1,1:m})  + H( Y^{N}_{R2}|X^{N}_{S2,1:m})    \nonumber\\ &\quad  + \underbrace{ H( X^{N}_{R1,1:\bar{n}}|Y^{N}_{D1}, W_1)}_{ \leq H( X^{N}_{R1,1:\bar{n}}|Y^{N}_{D1})  } -  \underbrace{H( X^{N}_{R1,1:\bar{n}}|Y^{N}_{D1}, W_1,W_2)}_{=0}  \nonumber\\ &\quad  +\underbrace{H( X^{N}_{R2,1:\bar{n}}| Y^{N}_{D2}, W_2)}_{\leq H( X^{N}_{R2,1:\bar{n}}| Y^{N}_{D2})} - \underbrace{H( X^{N}_{R2,1:\bar{n}}| Y^{N}_{D2}, W_2,W_1)}_{=0} \label{eq:ChainRule3837}  \\
&  \leq   I(Y^{N}_{D1}, W_1;W_2)  + I(  Y^{N}_{D2}, W_2; W_1)    \nonumber\\ &\quad   +H(Y^{N}_{R1}| X^{N}_{S1,1:m})   + H( Y^{N}_{R2}|X^{N}_{S2,1:m})   \nonumber\\ &\quad   + H( X^{N}_{R1,1:\bar{n}}|Y^{N}_{D1})  +  H( X^{N}_{R2,1:\bar{n}}| Y^{N}_{D2}) \label{eq:W1W2recon332}  \\
& =   H(Y^{N}_{D1},W_1)  - \underbrace{H(W_1|W_2)}_{=H(W_1)} - \underbrace{H(Y^{N}_{D1}|W_1,W_2)}_{ =0 }    \nonumber\\ &\quad  + H(Y^{N}_{D2},W_2)  - \underbrace{H(W_2|W_1)}_{=H(W_2)} - \underbrace{H(Y^{N}_{D2}|W_1,W_2)}_{=0 }   \nonumber\\ &\quad   +H(Y^{N}_{R1}| X^{N}_{S1,1:m})   + H( Y^{N}_{R2}|X^{N}_{S2,1:m})  \nonumber\\ &\quad  + H( X^{N}_{R1,1:\bar{n}}|Y^{N}_{D1})  +  H( X^{N}_{R2,1:\bar{n}}| Y^{N}_{D2}) \label{eq:2ChainRule564}  \\ 
& =   H(Y^{N}_{D1},W_1)  - H(W_1)    + H(Y^{N}_{D2},W_2)  - H(W_2)  \nonumber\\ &\quad +H(Y^{N}_{R1}| X^{N}_{S1,1:m})   + H( Y^{N}_{R2}|X^{N}_{S2,1:m}) \nonumber\\ &\quad + H( X^{N}_{R1,1:\bar{n}}|Y^{N}_{D1})  +  H( X^{N}_{R2,1:\bar{n}}| Y^{N}_{D2}) \label{eq:2W1W2ind}  \\ 
& =   H(Y^{N}_{D1}) + \underbrace{H(W_1|Y^{N}_{D1})}_{ \leq N \epsilon_N }  -  \underbrace{H(W_1)}_{=NR_1}   \nonumber\\ &\quad   + H(Y^{N}_{D2})+ \underbrace{H(W_2|Y^{N}_{D2}) }_{ \leq N \epsilon_N } -\underbrace{H(W_2)}_{=N R_2} \nonumber\\ &\quad +H(Y^{N}_{R1}| X^{N}_{S1,1:m})  +H( Y^{N}_{R2}|X^{N}_{S2,1:m})   \nonumber\\ &\quad    + H( X^{N}_{R1,1:\bar{n}}|Y^{N}_{D1})  +  H( X^{N}_{R2,1:\bar{n}}| Y^{N}_{D2})   \label{eq:2ChainRule976}  \\ 
& \leq    H(Y^{N}_{D1})     + H(Y^{N}_{D2})   + H(Y^{N}_{R1}| X^{N}_{S1,1:m})  \nonumber\\ &\quad  +  H( Y^{N}_{R2}|X^{N}_{S2,1:m})  \!+\! H( X^{N}_{R1,1:\bar{n}}|Y^{N}_{D1})  \!+\!  H( X^{N}_{R2,1:\bar{n}}| Y^{N}_{D2})  \nonumber\\ &\quad -  N R_1 -  N R_2 + 2  N \epsilon_N  \label{eq:sumratefinal234}  
\end{align}
where \eqref{eq:2fano302} follows from Fano's inequality; 
\eqref{eq:2MarkovYS} follows from the fact that $W_i \rightarrow Y^{N}_{Ri} \rightarrow  Y^{N}_{Di}$ forms a Markov chain ($Y^{N}_{Di}$ is a function of $Y^{N}_{Ri}$), for $i=1,2$;
\eqref{eq:equal236} results from  the  Claim~\ref{claim:m0} below, i.e., $H( Y^{N}_{R1}| W_1)  = H( X^{N}_{S2,1:m}| W_1) $ and $H( Y^{N}_{R2}| W_2)  = H( X^{N}_{S1,1:m}| W_2) $ for this setting without feedback cross-link;
\eqref{eq:addentropy}  follows from the fact that adding information increases the entropy; 
\eqref{eq:2ChainRule} follows from the chain rule for entropy;
\eqref{eq:2entropyadd} uses the fact that adding information increases the mutual information;
\eqref{eq:function326}  follows from the fact that $X^{N}_{Si,1:m}$ is a function of $\{ X^{N}_{Ri,1:\bar{n}}, W_i\}$ for this setting without feedback cross-link, for $i=1,2$;
\eqref{eq:ChainRule3837} follows from the chain rule; 
\eqref{eq:W1W2recon332} results from the fact that the knowledge of $\{W_2, W_1\}$ implies the knowledge of $X^{N}_{R1,1:\bar{n}}$ and $X^{N}_{R2,1:\bar{n}}$, and that  conditioning reduces entropy;
\eqref{eq:2ChainRule564} follows from the chain rule;
\eqref{eq:2W1W2ind} results from the fact that the two messages $W_1 $ and $W_2$ are independent, and the fact that  the knowledge of $\{W_2, W_1\}$ implies the knowledge of $Y^{N}_{D1}$ and $Y^{N}_{D2}$;
\eqref{eq:sumratefinal234} follows from Fano's inequality, i.e., $H(W_i|Y^{N}_{Di}) \leq N\epsilon_N$, and $H(W_i) =N R_i$, for $i=1,2$.

From \eqref{eq:sumratefinal234}, we consequently have 
\begin{align}
& N (2R_1+ 2 R_2-   3\epsilon_N)  \nonumber\\
& \leq   H(Y^{N}_{D1})   \!+\! H(Y^{N}_{D2})   \!+\!  H(Y^{N}_{R1}| X^{N}_{S1,1:m})   \!+\! H( Y^{N}_{R2}|X^{N}_{S2,1:m})  \nonumber\\ &\quad+ H( X^{N}_{R1,1:\bar{n}}|Y^{N}_{D1})  +  H( X^{N}_{R2,1:\bar{n}}| Y^{N}_{D2})   \\
& \leq   Nf  + Nf  +   N\max\{ n-m,  m \}  +  N\max\{ n-m,  m \}    \nonumber\\ &\quad  +  N( \bar{n} -f )^{+} + N ( \bar{n} -f )^{+} \label{eq:sumratefinal2753}  
\end{align}
which yields the bound in \eqref{eq:GeneralOB6} when  $N$ is large.

\begin{claim} \label{claim:m0}
For the setting without feedback cross-link, we have $H( Y^{N}_{R1}| W_1)  = H( X^{N}_{S2,1:m}| W_1) $ and $H( Y^{N}_{R2}| W_2)  = H( X^{N}_{S1,1:m}| W_2) $.
\end{claim}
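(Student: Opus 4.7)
The plan is to prove each equality by showing that, conditional on the relevant message, the two random vectors being compared are deterministic functions of one another, so their conditional entropies coincide. By the user-symmetry of the first-hop channel \eqref{eq:detTHIC1}--\eqref{eq:detTHIC2} and the backward channel \eqref{eq:detTHIChop2S1}--\eqref{eq:detTHIChop2S2} under the constraint $\bar{m}=0$, it suffices to establish the first equality; the second follows by swapping the roles of the two users.

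First I would show that $Y^N_{R1}$ is a deterministic function of $(W_1, X^N_{S2,1:m})$, by induction on $t$. The base case uses that $X_{S1}[1]$ depends only on $W_1$ (its feedback history is empty), so $Y_{R1}[1]=S^{q-n}X_{S1}[1]\oplus S^{q-m}X_{S2}[1]$ is computable from $W_1$ and $X_{S2,1:m}[1]$, since $S^{q-m}X_{S2}[1]$ only depends on the top $m$ bits of $X_{S2}[1]$. For the inductive step, assuming $Y^{t-1}_{R1}$ is in hand, I compute $X^{t}_{R1}$ (each $X_{R1}[s]$ is a function of $Y^{s-1}_{R1}$), then $Y^{t-1}_{S1}$ via \eqref{eq:detTHIChop2S1} specialized to $\bar{m}=0$, namely $Y_{S1}[s]=S^{\bar{q}-\bar{n}}X_{R1}[s]$, then $X_{S1}[t]$ from $(W_1, Y^{t-1}_{S1})$, and finally $Y_{R1}[t]$ from $(X_{S1}[t], X_{S2,1:m}[t])$.

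The reverse direction, that $X^N_{S2,1:m}$ is a deterministic function of $(W_1, Y^N_{R1})$, proceeds by an analogous induction: at time $t$, I reconstruct $X^{t}_{R1}$, $Y^{t-1}_{S1}$, and hence $X_{S1}[t]$ exactly as above, and then XOR off the $S^{q-n}X_{S1}[t]$ contribution from $Y_{R1}[t]$ to recover $S^{q-m}X_{S2}[t]$, whose nonzero entries form $X_{S2,1:m}[t]$. Combining both directions yields $H(Y^N_{R1}\mid W_1) = H(X^N_{S2,1:m}\mid W_1)$.

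The one subtlety, and the reason $\bar{m}=0$ is essential, is that the induction closes only because $Y_{S1}[t]$ (the signal driving S1's encoder) is a function of $X_{R1}[t]$ alone, which is in turn a function of $Y^{t-1}_{R1}$, the very object being reconstructed. If $\bar{m}>0$, then $Y_{S1}[t]$ would also depend on $X_{R2}[t]$, whose value depends on $Y^{t-1}_{R2}$, and the recursion would no longer be resolvable from $(W_1, X^N_{S2,1:m})$ alone. Once $\bar{m}=0$ breaks this cross-coupling, the rest is routine bookkeeping on causal functional dependencies, so I do not expect a serious technical obstacle.
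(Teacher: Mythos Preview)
Your proposal is correct and essentially mirrors the paper's argument. The paper proves the same equality via a term-by-term chain-rule expansion of $H(Y^N_{R1}\mid W_1)$, adding and removing conditioning variables using exactly the causal dependencies you invoke (that $X_{R1}[t]$ is a function of $Y^{t-1}_{R1}$, that $Y_{S1}[t]$ depends only on $X_{R1}[t]$ when $\bar m=0$, and that $Y_{R1}[t]$ is an injective function of $X_{S2,1:m}[t]$ given $X_{S1,1:n}[t]$); your two-direction inductive bijection is just a more explicit packaging of the same functional relationships, and your identification of $\bar m=0$ as the step that closes the recursion is precisely the point the paper relies on.
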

\begin{proof}
For the setting without feedback cross-link ($\bar{m} =0$), we have
\begin{align}
&H(Y^{N}_{R1}| W_1) \nonumber\\
&=  \sum  H(Y_{R1}[t]| W_1,Y^{t-1}_{R1})   \nonumber\\
&=  \sum  H(Y_{R1}[t]| W_1,Y^{t-1}_{R1}, X^{t}_{R1}, X^{t+1}_{S1})   \label{eq:function983} \\
&=  \sum  H(X_{S2,1:m}[t]| W_1,Y^{t-1}_{R1}, X^{t}_{R1}, X^{t+1}_{S1})   \label{eq:function3432} \\
&=  \sum  H(X_{S2,1:m}[t]| W_1,Y^{t-1}_{R1}, X^{t}_{R1}, X^{t+1}_{S1}, X^{t-1}_{S2,1:m})   \label{eq:function7743} \\
&=  \sum  H(X_{S2,1:m}[t]| W_1, Y^{t-1}_{R1}, X^{t+1}_{S1}, X^{t-1}_{S2,1:m})   \label{eq:function2248}   \\
&=  \sum  H(X_{S2,1:m}[t]| W_1,  X^{t+1}_{S1}, X^{t-1}_{S2,1:m})   \label{eq:function9623}\\
&=  \sum  H(X_{S2,1:m}[t]| W_1, X^{t-1}_{S2,1:m})   \label{eq:function1086}\\
&= H(X^{N}_{S2,1:m}| W_1)     
\end{align}
where \eqref{eq:function983} follows from the fact that  $X_{R1}[t]$  is a function of $Y^{t-1}_{R1}$, and that  $X_{S1}[t+1]$  is a function of $\{W_1, X^{t}_{R1}\}$ for this setting without feedback cross-link;
\eqref{eq:function3432} and \eqref{eq:function7743} follow from the fact that $Y_{R1}[t]$ is an injective function of  both $  X_{S1,1:n}[t]$ and   $X_{S2,1:m}[t]$ given the other;
\eqref{eq:function2248} results from the fact that $X^{t}_{R1}$ is a function of $ Y^{t-1}_{R1}$;
\eqref{eq:function9623} follows from the fact that $Y^{t-1}_{R1}$ is a function of $  \{X^{t-1}_{S1,1:n},  X^{t-1}_{S2,1:m}\}$;
\eqref{eq:function1086} follows from the fact that $X^{t+1}_{S1}$ is a function of $  \{Y_{S1}^{t},  W_1\}$, which in turn is a function of $  \{X_{R1}^{t},  W_1\}$ (since there is no feedback cross-link), and this is in turn a function of $  \{Y_{R1}^{t-1},  W_1\}$ and this is a function of $  \{X^{t-1}_{S2,1:m},  W_1\}$.

Similarly, we have $H( Y^{N}_{R2}| W_2)  = H( X^{N}_{S1,1:m}| W_2) $ due to the symmetry.
\end{proof}

\section{Conclusions}  \label{sec:conclusion}
In this work, we have investigated the capacity benefits that come from overhearing other links for obtaining feedback in wireless networks. We have focused on a deterministic two-hop interference channel where sources in the first hop can overhear the relay transmissions in the next hop which can serve as a feedback signal that can potentially increase the capacity of the first hop. The main challenge in this setup is to devise communication strategies that allow the relays to optimally feedforward information over the second hop while feeding back information over the first hop. We have developed  capacity achieving strategies for various regimes of this channel and showed that while these strategies achieve the performance with dedicated feedback in certain regimes, in certain other regimes, while overheard feedback is still useful, the performance is strictly inferior to the case when there is dedicated feedback links.

\section{Appendix - Achievability proof details  \label{sec:schemedetails}}

In this appendix we provide the details of the three achievability schemes overviewed in Section~\ref{sec:scheme} and prove Theorem~\ref{thm:Achievability} and Theorem~\ref{thm:m0}. Before going into the details of these schemes, we first introduce three classes of information bits for each source.
\bit
\item \emph{Private} bits: these are the information bits of each source that are decodable  by the relay in direct-link but not decodable by the  relay in the cross-link. 
\item \emph{Cooperative common} bits:  these are the information bits of each source that need to be learned by the other source node in order to improve the first hop capacity. Moreover, these bits are decodable  by both relays at the end of the communication process.
\item \emph{Non-cooperative common} bits:  these are the information bits of each source that are decodable  by both relays, but they do not need to be learned by the other source node.
\eit

\subsection{Utilizing cross-link overhearing when $\alpha \leq 2/3$: The scheme $\Xc_{\text{\text{FBXw}}}$}

As briefly described in Section~\ref{sec:schFBXw}, the scheme $\Xc_{\text{\text{FBXw}}}$ achieving rate $R_{\text{sum}}^{\text{FBXw}}$ in \eqref{eq:achieFBXw} is designed for the case when the first hop is in the weak interference regime ($\alpha \leq 2/3$) and makes use of the cross-link feedback for increasing the capacity of the first hop.

Our scheme operates in packets of $R_{\text{sum}}^{\text{FBXw}}$ bits per S-D pair. Each packet is associated with four phases. Among these phases, Phase 1 and Phase 4
involve communication over the first hop, while Phase 2 and Phase 3
involve communication over the second hop (and also the backward
interference channel since it represents the overheard signals over
the second hop). See Fig.~\ref{fig:BlocksFWFB}. 
At the end of these four
phases, the $R_{\text{sum}}^{\text{FBXw}}$ bits of each S-D pair can be decoded by its
corresponding relay. The relaying of these $R_{\text{sum}}^{\text{FBXw}}$ bits to the final
destinations is partially accomplished during the Phases 2 and 3 of
the current packet and the rest is completed during Phases 2 and 3 of
a future packet.
 
We next describe the four phases associated with a packet. In addition to the example shown in Fig.~\ref{fig:Scheme1Example12} and \ref{fig:Scheme1Example12FB}, we also provide here another typical example shown in Fig.~\ref{fig:Example471} and \ref{fig:Example471FB}.

\begin{figure}
\centering
\includegraphics[width=8.9cm]{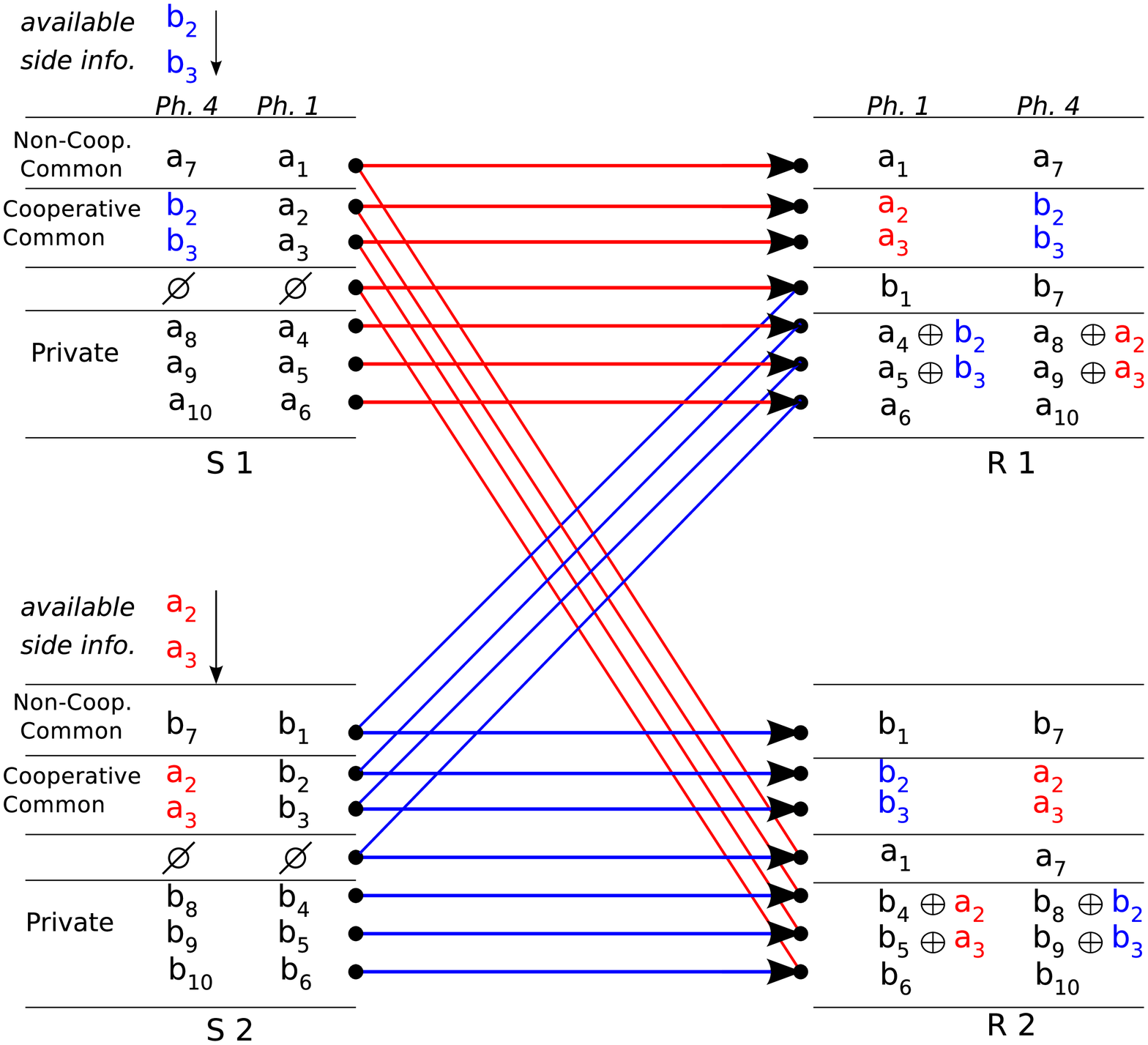}
\caption{The first hop transmission (phase~1 and phase~4) illustration  for scheme $\Xc_{\text{\text{FBXw}}}$ ($m=4, n=7, \bar{m} = \bar{n}=1, f=5$). }
\label{fig:Example471}
\end{figure}

\begin{figure}
\centering
\includegraphics[width=8.9cm]{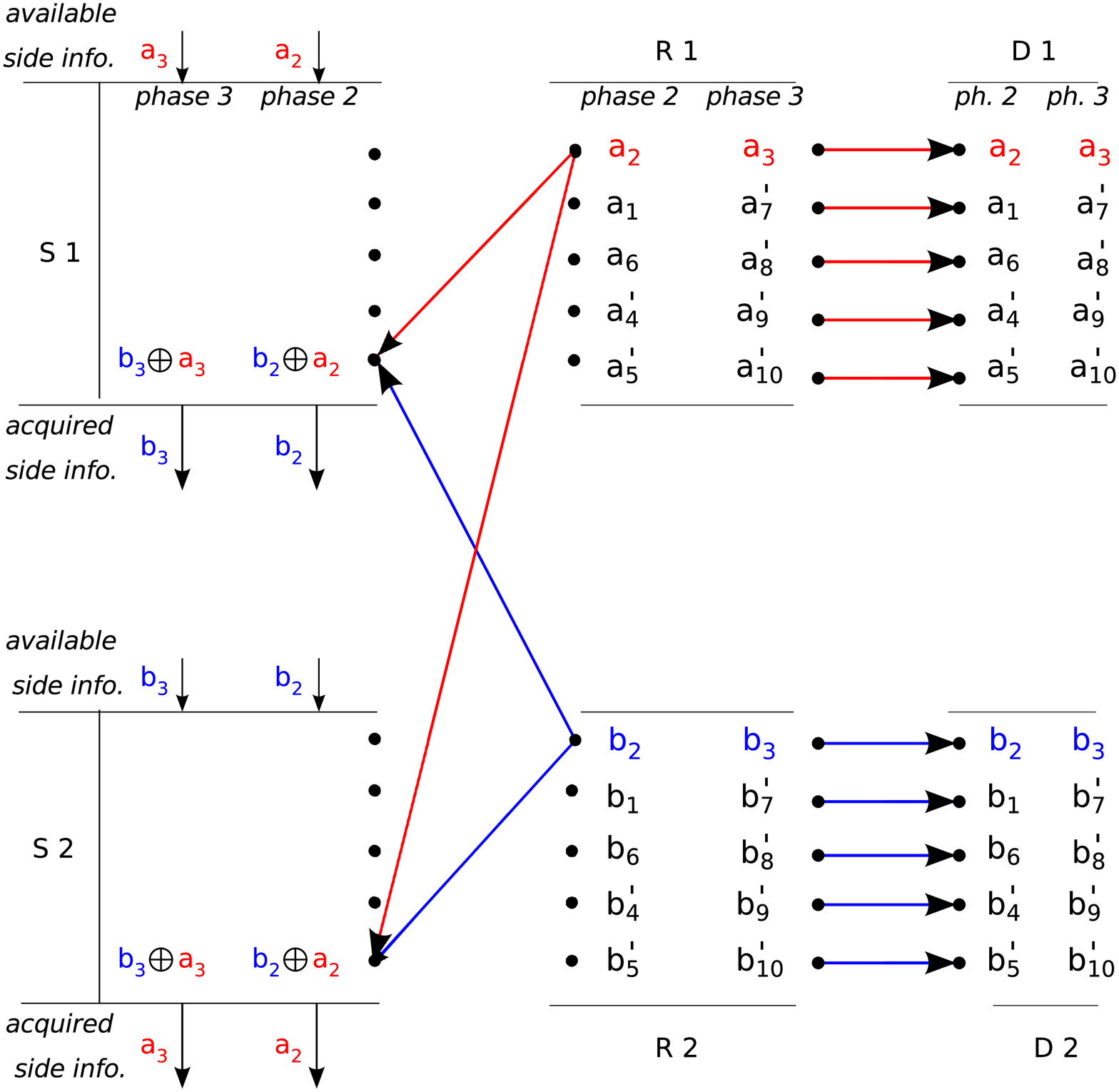}
\caption{The second hop transmission (phase~2 and phase~3) and overhearing illustration  for scheme $\Xc_{\text{\text{FBXw}}}$ ($m=4, n=7, \bar{m} = \bar{n}=1, f=5$). } 
\label{fig:Example471FB}
\end{figure}

\textit{Phase 1:} Each source sends \[\min \bigl\{(2m-n)^{+}, \  f \bigr\}\] \emph{non-cooperative common} bits over the uppermost signal levels,  and \[\min\bigl\{(n-m),  \ \bigl(f -(2m-n)^{+} \bigr)^{+}   \bigr\}\] \emph{private} bits over the lowermost signal
levels, as well as
\begin{align*}
 \min\Bigl\{ 2\bar{m},    \   2n- m -  2\max\{ n-m, m\}, \\  (2f - 2\max\{ n-m, m\}  )^{+}  \Bigr \} 
 \end{align*}
\emph{cooperative common} bits over the signal levels below that of  non-cooperative common bits. For the example shown in Fig.~\ref{fig:Example471},  S1 sends non-cooperative common bit $a_1$, cooperative common bits $a_2, a_3$ and private bits $a_4, a_5, a_6$ over the corresponding signal levels, while S2 sends its information bits in a similar way.
One can see that, the cooperative and  non-cooperative common bits of each source node are received interference free at the relay in the direct-link, while the private bits, that are only visible to the relay in direct-link, arrive interfered  partially with the cooperative common bits of the other source node. 
For the example shown in Fig.~\ref{fig:Example471},  the common bits $a_1, a_2, a_3$ of S1 are received interference free at R1, while the private bits $a_4, a_5$ of S1 are  received at R1 interfered  with cooperative common bits $b_2, b_3$ of S2.
In the next two phases, each source node will learn the interfering (cooperative common) bits of the other source node, which will allow to resolve the interference in the fourth phase.

\textit{Phases 2 and 3:} At the beginning of these phases the relays
have recovered the cooperative and  non-cooperative common bits of their respective source nodes. 
Since the cooperative common bits need to be both forwarded to their destination and fed
back to the corresponding source node to resolve interference, we use the uppermost level of the relay signals to accomplish both of these goals at the same time. For the example shown in Fig.~\ref{fig:Example471FB}, R1 transmits the cooperative common bits $a_2$ and $a_3$ over its uppermost signal level in Phase 2 and Phase 3 respectively, while R2 transmits similarly $b_2$ and $b_3$. One can see that, each source node can decode the cooperative common bits of the other source node from the overheard signal by using its side information. 
The remaining bit levels of the forward channel are utilized to send non-cooperative common bits and private bits, that have been decoded by the relays but have not been forwarded to the their destinations yet, and that are either from the current packet or from the packet last before.  For the example shown in Fig.~\ref{fig:Example471FB},  the remaining bit levels of R1 are utilized to send $a_1, a_6 $ of the current packet, and $a'_4, a'_5, a'_7,a'_8, a'_9, a'_{10}$ of  the packet last before.

\textit{Phase 4:} 
After decoding the the cooperative common bits that were sent from the other source node at Phase~1 of the current packet,  each source sends fresh $\min\{(2m-n)^{+}, \ f\}$ \emph{non-cooperative common} bits over the uppermost signal levels,  and fresh $\min\bigl\{(n-m),  \ \bigl(f -(2m-n)^{+} \bigr)^{+}   \bigr\}$ \emph{private} bits over the lowermost signal
levels, as well as  the decoded \emph{cooperative common} bits of the other source node over the signal levels below that of  non-cooperative common bits.  See Fig.~\ref{fig:Example471}.
Note that the transmission of the cooperative common bits of the other source node allows to resolve the interference at Phase~1. 
For the example shown in Fig.~\ref{fig:Example471},  the transmission of the cooperative common bits $b_2, b_3$ of S2 from S1 allows to resolve the interference $b_2, b_3$ at R1, while  the transmission of $a_2, a_3$ of S1 from S2 allows to resolve the interference $a_2, a_3$ at R2.
Note that while R1 and R2 can decode all the information bits of their source nodes of the current packet, part of these bits have not been communicated yet to their final destinations. This is accomplished in the Phases 2 and 3 corresponding to the packet after next. Similarly, partial bits of the packet last before were forwarded to their destinations in Phase 2 and 3 corresponding to the current packet. See the timeline in Fig.~\ref{fig:BlocksFWFB}.

With this strategy,  $R_{\text{sum}}^{\text{FBXw}} $  bits of independent information of each source can be communicated to their destination in every 2 channel uses (assuming a large number of packets and ignoring the effect of the first two and the last two packets), i.e., for the case with $\alpha \in [0,  2/3]$ we have
\begin{align*}
& 2\min\Bigl\{(2m-n)^{+}, f\Bigr\}  \\ & +  2\min\Bigl\{(n-m),  \ \bigl(f -(2m-n)^{+} \bigr)^{+}  \Bigr \}    \nonumber\\ & +  \min \Bigl\{ 2\bar{m},       2n - m -  2\max\{ n-m, m \},  \\&  \quad \quad \quad \quad (2f - 2\max\{ n-m, m\}  )^{+} \Bigr\}   \\
=&
\begin{cases}
     2 f    &     \text{if } \      f \leq \max\{ n-m,m\}  \\
 \min \Bigl\{  2\max\{n-m,m \}+2 \bar{m}  , & \\ \quad \quad\quad  2n- m ,  \   2 f \Bigr\}   &     \text{if } \      f \geq  \max\{ n-m,m\}  
\end{cases}
\\  = & \min \Bigl\{  2\max\{n-m,m \}+2 \bar{m}  , \  2n- m ,  \   2 f \Bigr\}       \\
 = &  R_{\text{sum}}^{\text{FBXw}}.
\end{align*} 
This yields a sum rate of $R_{\text{sum}}^{\text{FBXw}}$ bits per channel use, which turns out to be optimal in the regime of $(\alpha \in [0,  2/3]$, $\bar{m} \geq \bar{n})$. 
See Theorems~\ref{thm:outerbound} and \ref{thm:Achievability}.

\subsection{Utilizing direct-link overhearing when $\alpha \leq 2/3, \bar{m}  = 0$: The scheme $\Xc_{\text{\text{RSw}}}$}

As briefly described in Section~\ref{sec:schRSw}, the scheme $\Xc_{\text{\text{RSw}}}$ achieving rate $R_{\text{sum}}^{\text{RSw}}$ in \eqref{eq:achieRSw} is designed for the case when the first hop is in the weak interference regime and there is no cross-link in the backward
interference channel ($\alpha \leq 2/3, \bar{m} = 0 $). 
In contrast to the previous scheme, this scheme makes use of the direct-link feedback
for increasing the capacity of the first hop. In this case there is a
tradeoff between utilizing the relay bits for conveying information to
the destinations versus feeding back information to the source nodes,
which results in rate splitting. 

As before, this scheme operates in packets, and each packet is
associated with four phases, \emph{i.e.,} Phase 1 and Phase 4 involve
communication over the first hop while Phase 2 and Phase 3 involve
communication over the second hop and the backward channel. See
Fig.~\ref{fig:BlocksFWFB}. 
At the end of these four phases, the $R_{\text{sum}}^{\text{RSw}}/2$ bits of each S-D pair can be
decoded by its corresponding relay. 
The relaying of these $R_{\text{sum}}^{\text{RSw}}/2$ bits to
the final destination will again be accomplished partially in the
Phases 2 and 3 of the current packet, and the rest in Phases 2 and 3
of the packet after next.

We next describe the four phases associated with a packet. In addition to the example shown in Fig.~\ref{fig:Scheme3EgPh1} and \ref{fig:Scheme3EgFB}, we also provide here another typical example ($m=4, n=7, \bar{n}=1, \bar{m} = 0,  f=5$) shown in Fig.~\ref{fig:RSwExample471} and \ref{fig:RSwExample471FB}.

\begin{figure}
\centering
\includegraphics[width=8.9cm]{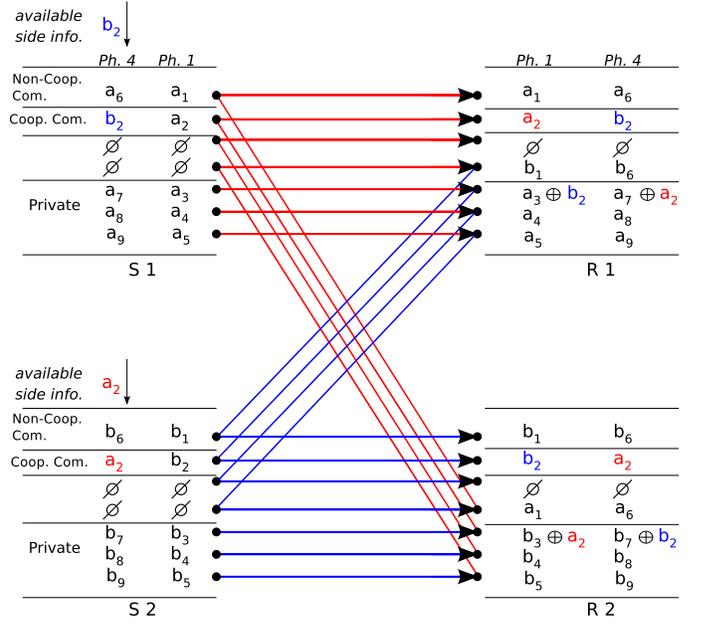}
\caption{The first hop transmission (phase~1 and phase~4) illustration  for scheme $\Xc_{\text{\text{RSw}}}$ ($m=4, n=7, \bar{n}=1, \bar{m} = 0,  f=5$). }
\label{fig:RSwExample471}
\end{figure}

\begin{figure}
\centering
\includegraphics[width=8.9cm]{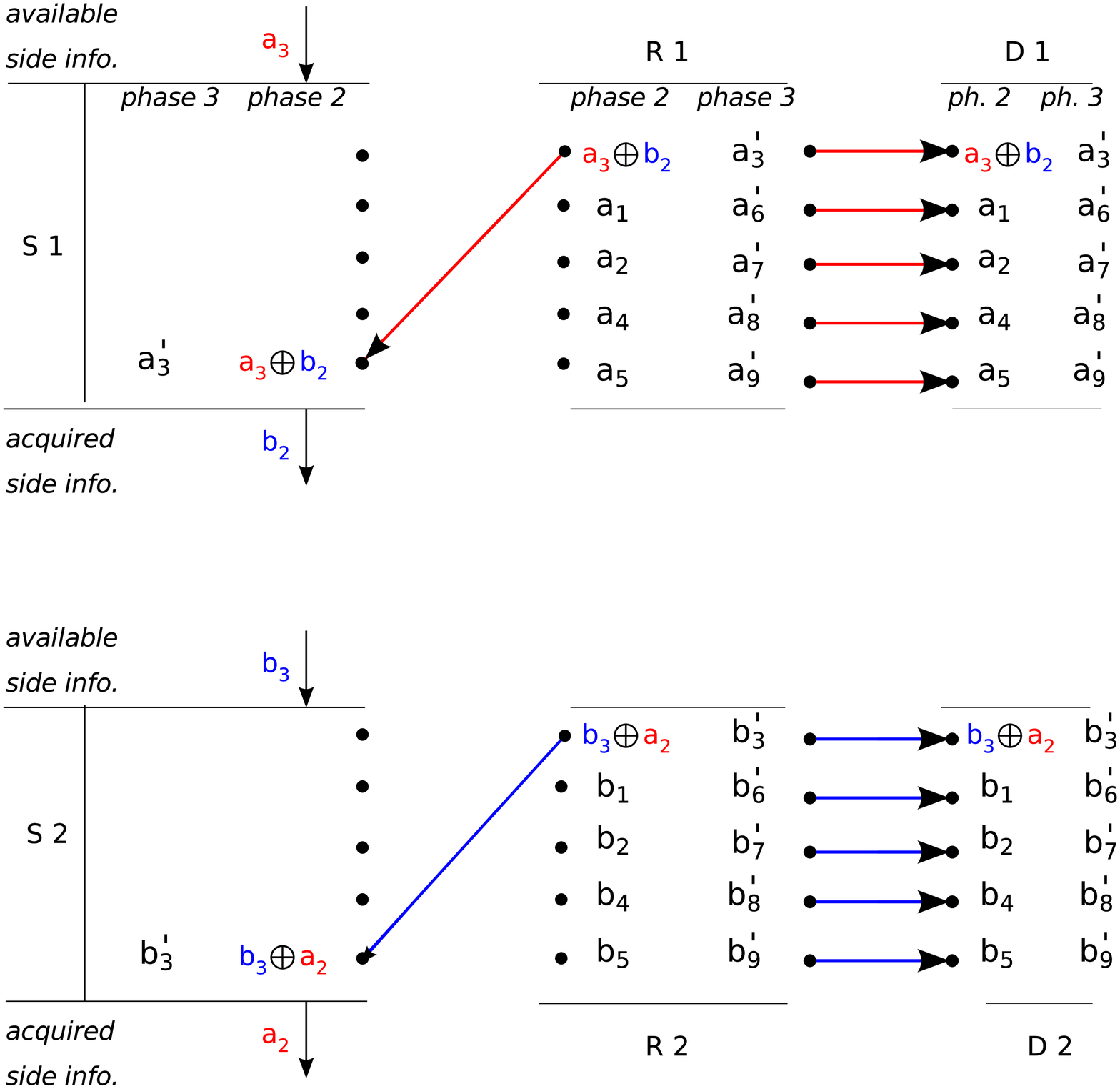}
\caption{The second hop transmission (phase~2 and phase~3) and overhearing illustration  for scheme $\Xc_{\text{\text{RSw}}}$ ($m=4, n=7, \bar{n}=1, \bar{m} = 0,  f=5$). } 
\label{fig:RSwExample471FB}
\end{figure}

\textit{Phase 1:} Each source sends \[\min \bigl\{(2m-n)^{+}, f \bigr\}\] \emph{non-cooperative common} bits over the uppermost signal levels,  and \[\min\Bigl\{(n-m),  \bigl(f -(2m-n)^{+} \bigr)^{+}   \Bigr\}\] \emph{private} bits over the lowermost signal levels, as well as
\begin{align*}
\min\Bigl\{    2( \bar{n} -f )^{+}+ 2   f^{*}  ,    \   2n- m -  2\max\{ n-m, m\}, \\ (2f - 2\max\{ n-m, m\}  )^{+}  \Bigr \} 
\end{align*}
 \emph{cooperative common} bits over the signal levels below that of  non-cooperative common bits, where 
\begin{align} \label{eq:fdefwregion}
f^{*} \defeq \min \Bigl\{  \frac{ \bigl(f- \max\{ n-m,m \}  -  ( \bar{n} -f )^{+} \bigr)^{+} }{ 2}, \nonumber \\  \bar{n} - ( \bar{n} -f )^{+} ,  f   \Bigr\} 
\end{align}
For the example shown in Fig.~\ref{fig:RSwExample471},  S1 sends non-cooperative common bit $a_1$, cooperative common bit $a_2$ and private bits $a_3, a_4, a_5$ over the corresponding signal levels, while S2 sends its information bits in a similar way.
One can see that, the cooperative and  non-cooperative common bits of each source node are received interference free at the relay in direct-link, while the private bits, that are only visible to the relay in direct-link, arrive interfered  partially with the cooperative common bits of the other source node. 
For the example shown in Fig.~\ref{fig:RSwExample471},  the common bits $a_1, a_2$ of S1 are received interference free at R1, while the private bits $a_3$ of S1 are  received at R1 interfered  with cooperative common bits $b_2$ of S2.
In the next two phases, each source node will learn the interfering (cooperative common) bits of the other source node, which will allow to resolve the interference in the fourth phase.

\textit{Phase 2 and 3:} At the beginning of  Phase~2, each relay has recovered the cooperative common bits, non-cooperative common bits and partial private bits of the respective source,  and these bits need to be forwarded to the respective destination.
As before, each source needs to learn the common information of the other source node in order to resolve interference. However, since $\bar{m}=0$ this information needs to be  fed back  over the backward direct-link rather than the backward cross-link. 
For the example shown in Fig.~\ref{fig:RSwExample471FB}, R1 and R2 feed back $a_3\oplus b_2$ and  $b_3\oplus a_2$, received in the previous phase,  to S1 and S2 respectively through the backward  direct-link (uppermost signal level). Then the cooperative common bits $b_2$ and $a_2$ can be decoded by S1 and S2 respectively by using the side information at each source node ($a_3$ of S1, $b_3$ of S2). 
The remaining bit levels of the forward channel are utilized to send non-cooperative common bits and private bits, that have been decoded by the relays but have not been forwarded to the their destinations yet, and that are either from the current packet or from the packet last before.  For the example shown in Fig.~\ref{fig:RSwExample471FB},  the remaining bit levels of R1 are utilized to send $a_1, a_2, a_4,a_5$ of the current packet, and $a'_3, a'_6, a'_7,a'_8, a'_9$ of  the packet last before.
Note that in this case the relay bits are split between feedforward and feedback communication. As illustrated  in Fig.~\ref{fig:RSwExample471FB}, in Phase 2 the uppermost bit level of the relay is used solely for feedback and does not provide any useful information to the corresponding destination, while in Phase 3 this bit level is used for sending information to the destination and does not provide any useful feedback information for the source node.

\textit{Phase 4:} Having learned the cooperative common bits of the other source node, each source sends fresh $\min\{(2m-n)^{+}, f\}$ \emph{non-cooperative common} bits over the uppermost signal levels,  and fresh $\min\bigl\{(n-m),  \bigl(f -(2m-n)^{+} \bigr)^{+}   \bigr\}$ \emph{private} bits over the lowermost signal levels, as well as  the decoded \emph{cooperative common} bits of the other source node over the signal levels below that of  non-cooperative common bits. 
See Fig.~\ref{fig:RSwExample471}.
Note that the transmission of the cooperative common bits of the other source node allows to resolve the interference at Phase~1. 
For the example shown in Fig.~\ref{fig:RSwExample471},  the transmission of the cooperative common bits $b_2$ of S2 from S1 allows to resolve the interference $b_2$ at R1, while  the transmission of $a_2$ of S1 from S2 allows to resolve the interference $a_2$ at R2.
Note that while R1 and R2 can decode all the information bits of their source nodes of the current packet, part of these bits have not been communicated yet to their final destinations. This is accomplished in the Phases 2 and 3 corresponding to the packet after next. Similarly, partial bits of the packet last before were forwarded to their destinations in Phase 2 and 3 corresponding to the current packet. 

\begin{figure}
\centering
\includegraphics[width=8.7cm]{Scheme1Example12}
\caption{The first hop transmission (phase~1 and phase~4) illustration  for scheme $\Xc_{\text{\text{RSw}}}$ ($m=2, n=4, f=3,  \bar{n} =4, \bar{m} = 0 $). }
\label{fig:RSwExample2434}
\end{figure}

\begin{figure}
\centering
\includegraphics[width=8.7cm]{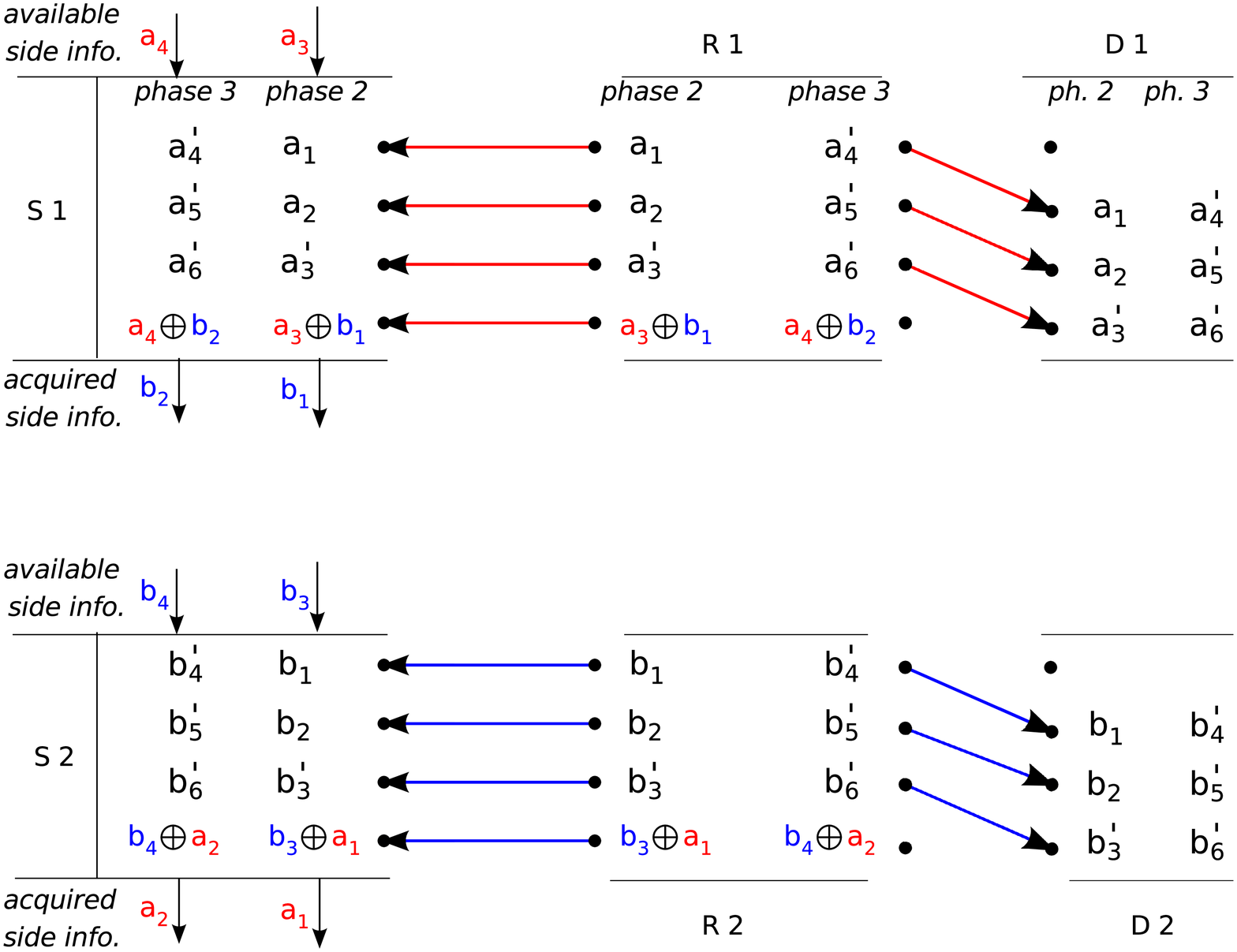}
\caption{The second hop transmission (phase~2 and phase~3) and overhearing illustration  for scheme $\Xc_{\text{\text{RSw}}}$ ($m=2, n=4, f=3,  \bar{n} =4, \bar{m} = 0 $). } 
\label{fig:RSwExample2434FB}
\end{figure}

With this strategy,  the total $R_{\text{sum}}^{\text{RSw}}$ independent bits of each source can be communicated to their destination in every 2 channel uses, 
i.e., for the case with $\alpha \in [0,  2/3]$ we have
\begin{align*}
&  2\min\Bigl\{(2m-n)^{+}, f\Bigr\}  \nonumber\\ & +  2\min\Bigl\{(n-m),  \bigl(f -(2m-n)^{+} \bigr)^{+}  \Bigr \}    \nonumber\\ & +    \min\Bigl\{    2( \bar{n} -f )^{+}+ 2   f^{*}  ,    \   2n- m -  2\max\{ n-m, m\},   \\&  \quad\quad \quad\quad  (2f - 2\max\{ n-m, m\}  )^{+}  \Bigr \} \\
=&
\begin{cases}
     2 f    &   \!\!\!\!\!  \text{if } \      f \leq \max\{ n-m,m\}  \\
     2 f    &   \!\!\!\!\!   \text{if } \     \max\{ n-m,m\}  \leq  f \leq   \Delta_0     \\
   \min \{ f + \Delta_0 , \  2n-m \}   &  \!\!\!\!\!    \text{if } \      f \geq  \Delta_0  \\ & \!\!\!\!\!  \text{and} \   f\!\geq\! \bar{n} \!-\! ( \bar{n} \!-\! f )^{+} \!\geq \!     \frac{ (f\!- \Delta_0 )^{+} }{ 2} \\
    \min \{ 2n- m  , & \\   2 \max\{ n-m,m\}  +  2 \bar{n}  \}   &  \!\!\!\!    \text{if } \      f \geq  \Delta_0 \\&  \!\!\!\! \text{and} \   f\!\geq\!   \frac{ (f\!- \Delta_0)^{+}}{ 2}  \!\geq \!    \bar{n} \!-\! ( \bar{n} \!-\! f )^{+}  \\
     \min \{f + \Delta_0 , \  2n-m \}   &   \!\!\!\!\!   \text{if } \      f \geq  \Delta_0 \\&   \!\!\!\!\!  \text{and} \  \bar{n} \!-\! ( \bar{n} \!-\! f )^{+} \geq f\!\geq\!  \frac{ (f\!- \Delta_0 )^{+} }{ 2}  
\end{cases}
\\  = &\min \Bigl\{  f+ \max\{ n-m,m \}  + (  \bar{n} - f )^{+} ,   \\  &   \quad \quad \quad 2\max\{ n-m,m \} +  2\bar{n},      \ 2n-m ,  \  2f   \Bigr\}     \\
 = & R_{\text{sum}}^{\text{RSw}}
\end{align*}
where $\Delta_0\defeq \max\{ n-m,m\} +  ( \bar{n} -f )^{+} $ and $f^{*}$ is defined in  \eqref{eq:fdefwregion}. 
This yields a sum rate of $R_{\text{sum}}^{\text{RSw}}$ bits per channel use, which turns out to be optimal in the regime of $(\alpha \in [0,  2/3]$, $\bar{m} =0)$. 
See Theorem~\ref{thm:m0}.

\begin{remark} 
In addition to the example shown in Fig.~\ref{fig:Scheme3EgPh1} and \ref{fig:Scheme3EgFB}, and the example shown in Fig.~\ref{fig:RSwExample471} and \ref{fig:RSwExample471FB}, we also provide another typical example (with $\bar{n} > f$) shown in Fig.~\ref{fig:RSwExample2434} and \ref{fig:RSwExample2434FB}.   
In the last example, since $(\bar{n} - f)^{+}$ lowermost levels of the relay signal are visible to the  source node but not to the destination,  those $(\bar{n} - f)^{+}$ lowermost levels are utilized only to feed back the bits containing cooperative common information. As shown in  Fig.~\ref{fig:RSwExample2434FB}, $a_3\oplus b_1$ and  $a_4\oplus b_2$  (the bits containing the cooperative common information $b_1$ and $b_2$ respectively) are fed back from the lowermost levels of the R1 signal, and  $b_3\oplus a_1$ and  $b_4\oplus a_2$ are fed back from the lowermost levels of the R2 signal.  This is different from that in the first two examples with $\bar{n} \leq f$, where the bits containing cooperative common information are fed back from the uppermost levels of the relay signals such that those bits can be overheard by the corresponding sources (cf. Fig.~\ref{fig:Scheme3EgFB} and Fig~\ref{fig:RSwExample471FB}).
\end{remark}

\subsection{Utilizing direct-link overhearing when  $\alpha \geq 2$: The scheme  $\Xc_{\text{\text{RSs}}}$}

As briefly described in Section~\ref{sec:schRSs},  the scheme $\Xc_{\text{\text{RSs}}}$ achieving rate $R_{\text{sum}}^{\text{RSs}}$ in \eqref{eq:achieRSs} is designed for the case when the first hop is in the strong interference regime  ($\alpha \geq 2$). 
Similarly to the scheme $\Xc_{\text{\text{RSw}}}$, this four-phase scheme is based on direct-link overhearing and it uses rate splitting at the relay for feedback and feedforward to achieve the optimal capacity.
At the end of the four phases of a packet,  $\Xc_{\text{\text{RSs}}}/2$ bits of each S-D pair can be
decoded by its corresponding relay. The relaying of these $\Xc_{\text{\text{RSs}}}/2$ bits to
the final destination will again be accomplished partially in the
Phases 2 and 3 of the current packet, and the rest in Phases 2 and 3
of the packet after next.

We next describe the four phases associated with a packet. In addition to the example shown in Fig.~\ref{fig:Scheme2EgPh1} and Fig.~\ref{fig:Scheme2EgFB}, we also provide here another typical example ($m=4, n=1, f=2,  \bar{n} =3, \bar{m} = 1$) shown in Fig.~\ref{fig:RSsExample41231} and \ref{fig:RSsExample41231FB}.

\textit{Phase 1:} Each source sends \[\min \bigl\{ n, \ f \bigr\}\] \emph{non-cooperative common} bits over its uppermost signal levels, as well as \[\min\Bigl\{    2( \bar{n} -f )^{+} +  2   f^{'}  ,  m- 2n, \ (2f - 2n )^{+}  \Bigr \} \]   \emph{cooperative common} bit over the signal levels below that of the non-cooperative common bits, where 
\begin{align} \label{eq:fRSs}
f^{'} \defeq  \min \Bigl\{  \frac{ \bigl(f- n  -  ( \bar{n} -f )^{+} \bigr)^{+} }{ 2},   \bar{n} - ( \bar{n} -f )^{+} ,  f   \Bigr\} 
\end{align}
The rest bit levels are not utilized and are fixed to $0$. 
For the example shown in Fig.~\ref{fig:RSsExample41231},  S1 sends non-cooperative common bit $a_1$ and cooperative common bits $a_2, a_3$ over the upper three signal levels respectively, while S2 sends $b_1$ and $b_2,b_3$ in a similar way.
One can see that, the non-cooperative common bit, $a_1$ of S1 and $b_1$ of S2, is received interference free at both relays,  while the cooperative common bits, $a_2,a_3$ of S1 and $b_2,b_3$ of S2,  are received (interference free) at the relay in cross-link but not at the relay in direct-link.
 In order to improve the first hop capacity,  in the next two phases the  cooperative common bits $a_2,a_3$ will be fed back from R2 to S2 such that it can be sent from S2 to R1 in the fourth phase,  while the cooperative common bit  $b_2,b_3$ will be fed back from R1 to S1 such that it can be sent from S1 to R2 in the fourth phase.

\textit{Phase 2 and 3:} At the beginning of  Phase~2, each relay has recovered all  non-cooperative common bits of its respective source  that need to be forwarded to their destination, and has recovered all cooperative common bits of the other source that need to be  fed back through the backward  direct-link.   
For the setting with $\bar{n} \leq f$ (cf. Fig.~\ref{fig:Scheme2EgPh1} and~\ref{fig:Scheme2EgFB}), the cooperative common bits are fed back from the uppermost levels of the relay signal, such that those bits can be overheard by the source node. For the other setting with $\bar{n} > f$ (cf.  Fig.~\ref{fig:RSsExample41231} and \ref{fig:RSsExample41231FB}), the cooperative common bits are fed back from the lowermost levels of the relay signal, such that the lowermost $(\bar{n}  - f)^{+}$ levels of the relay signal --- that are are visible to the respective source node but not to the destination ---  can be utilized properly (i.e., used for feeding back information only).
For the example shown in Fig.~\ref{fig:RSsExample41231FB}, R1 transmits the cooperative common bits $b_2$ and $b_3$ over its lowermost signal
level in Phase 2 and Phase 3 respectively, while R2 transmits
similarly $a_2$ and $a_3$.   
The remaining signal levels of the relay signal are utilized to forward the decoded non-cooperative common bits,  $a_1, a'_2,  a'_3, a'_4$ of S1 and $b_1, b'_2,  b'_3, b'_4$ of S2, to its destination, where  the bits $a'_2,  a'_3, a'_4, b'_2,  b'_3, b'_4$  belong to the packet last before and have not been forwarded yet.  
Note that in this case the relay bits are split between feedforward and feedback communication. 

 \begin{figure}
\centering
\includegraphics[width=8.5cm]{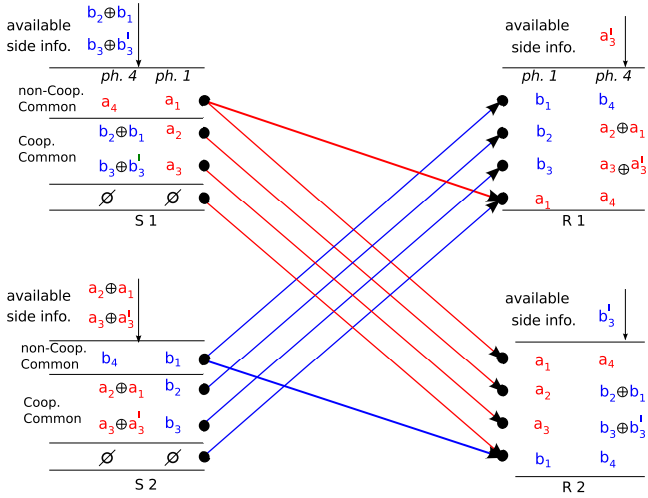}
\caption{The first hop transmission (phase~1 and phase~4) illustration  for scheme $\Xc_{\text{\text{RSs}}}$ ($m=4, n=1, f=2,  \bar{n} =3, \bar{m} = 1 $). }
\label{fig:RSsExample41231}
\end{figure}

\begin{figure}
\centering
\includegraphics[width=8.7cm]{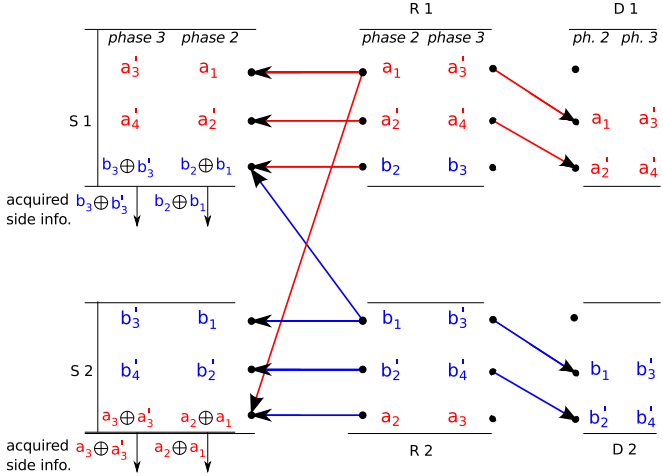}
\caption{The second hop transmission (phase~2 and phase~3) and overhearing illustration  for scheme $\Xc_{\text{\text{RSs}}}$ ($m=4, n=1, f=2,  \bar{n} =3, \bar{m} = 1 $). } 
\label{fig:RSsExample41231FB}
\end{figure}

\textit{Phase 4:}  In this phase each source sends \emph{fresh} $\min \bigl\{ n, \ f \bigr\}  $  non-cooperative common bits to the relay in direct-link and sends \emph{non-fresh} $\min\Bigl\{    2( \bar{n} -f )^{+} +  2   f^{'}  ,  m- 2n, \ (2f - 2n )^{+}  \Bigr \}$  bits containing previous cooperative common information of the other source to the relay in cross-link.      
For the example shown in Fig.~\ref{fig:RSsExample41231} and \ref{fig:RSsExample41231FB},  after Phase~3 S1 has received the linear combinations $b_2\oplus b_1$ and  $b_3\oplus b'_3$ that need to be sent to R2 such that the cooperative common bits $b_2$ and $b_3$ can be recovered by R2 with the prior knowledge of $ b_1$ and  $b'_3$.  Similarly,  S2 has received the linear combinations $a_2\oplus a_1$ and  $a_3\oplus a'_3$ that need to be sent to R1.
Then during Phase~4 S1 transmits the fresh  bit $a_4$ and  the combinations $b_2\oplus b_1$ and  $b_3\oplus b'_3$  from its three uppermost signal levels  respectively,  while S2 sends the fresh  bit $b_4$ and  the combinations $a_2\oplus a_1$ and  $a_3\oplus a'_3$ in a similar way. 
Note that this allows R1 to learn  $a_2, a_3$ from the loop $S1 \rightarrow R2 \rightarrow S2 \rightarrow R1$, and allows R2 to learn $b_2, b_3$ from the loop $S2 \rightarrow R1 \rightarrow S1 \rightarrow R2$. Note that while R1 and R2 can decode $\{a_2, a_3, a_4\}$ and $\{b_2,b_3,b_4\}$ respectively, these bits have not been communicated yet to their final destinations. This is accomplished in the Phase~2 and Phase 3 corresponding to the packet after next.
Note that, the encoding and decoding of the general scheme follow similarly from that of the examples shown in  Fig.~\ref{fig:RSsExample41231}, \ref{fig:RSsExample41231FB} and in  Fig.~\ref{fig:Scheme2EgPh1}, \ref{fig:Scheme2EgFB}.

With this strategy,  the total $R_{\text{sum}}^{\text{RSs}}$ independent bits of each source can be communicated to their destination in every 2 channel uses, i.e., for the case with $\alpha\geq 2 $ we have 
\begin{align}
&  2\min \bigl\{ n, \ f \bigr\}  \nonumber\\ & + \min\Bigl\{    2( \bar{n} -f )^{+} +  2   f^{'}  ,  m- 2n, \ (2f - 2n )^{+}  \Bigr \}  \nonumber\\
=&
\begin{cases}
     2 f    &     \text{if } \   Con1      \\
     2 f    &     \text{if } \    Con2   \\
   \min \{ n+ f +  ( \bar{n} -f )^{+} , \  m \}   &     \text{if } \  Con3     \\
    \min \{  2n + 2 \bar{n} , \  m \}   &     \text{if } \   Con4    \\
     \min \{ n+ f +  ( \bar{n} -f )^{+} , \  m \}   &     \text{if } \ Con5   
\end{cases} \nonumber\\ 
 = &\min \{  \ n + f + ( \bar{n} -f )^{+},     \ 2n+ 2\bar{n},    \  m , \  2f  \}         \nonumber\\
 = & R_{\text{sum}}^{\text{RSs}}    \nonumber
\end{align}
where 
\begin{align*}
Con1 &:= \{f \leq n  \}  \\ 
Con2 &:=\{ n \leq  f \leq n+  ( \bar{n} -f )^{+} \} \\ 
Con3 &:=\Bigl\{ f \geq n+  ( \bar{n} -f )^{+}  \\ &\quad   \text{and} \   f\geq \bar{n} - ( \bar{n} - f )^{+} \geq      \frac{ (f-n -  ( \bar{n} - f )^{+} )^{+} }{ 2} \Bigr \}\\ 
Con4 &:=\Bigl\{ f \geq n+  ( \bar{n} -f )^{+}   \\ &\quad   \text{and} \   f\geq   \frac{ (f-n -  ( \bar{n} - f )^{+} )^{+} }{ 2}  \geq     \bar{n} - ( \bar{n} - f )^{+} \Bigr \}\\ 
Con5 &:=\Bigl\{ f \geq n+  ( \bar{n} -f )^{+} \\ &\quad   \text{and} \  \bar{n} - ( \bar{n} - f )^{+} \geq f\geq   \frac{ (f-n -  ( \bar{n} - f )^{+} )^{+} }{ 2}    \Bigr \}
\end{align*}
and $f^{'}$ is defined in \eqref{eq:fRSs}.
This yields a sum rate of $R_{\text{sum}}^{\text{RSs}}$ bits per channel use, which turns out to be optimal in the regime of $ \alpha  \geq 2$. 
See Theorems~\ref{thm:outerbound} and \ref{thm:Achievability}.



\end{document}